\newcommand{\name}{\textsf{Info-Commit}}
\newcommand{\namespace}{\name{ }}
\DeclareMathOperator{\rank}{rank}
\DeclareMathAlphabet{\mathbfsl}{OT1}{cmr}{b}{it} 
\DeclareMathAlphabet\mathbfcal{OMS}{cmsy}{b}{n}
\newcommand*\bell{\ensuremath{\boldsymbol\ell}}
\title{\name: Information-Theoretic \\ Polynomial Commitment}
\author{%
	 Saeid Sahraei, Salman Avestimehr and Ramy E. Ali
	\vspace{-.25in}
	\thanks{Saeid Sahraei was with the Department of Electrical Engineering, University of Southern California, Los Angeles, CA 90089 USA. He is now with the Department of Wireless Research and
Development, Qualcomm Technologies, Inc., San Diego, CA 92121 USA
(e-mail: \href{mailto:ssahraei@qti.qualcomm.com}{ssahraei@qti.qualcomm.com}).}
	\thanks{Salman Avestimehr and Ramy E. Ali are with the Department of Electrical Engineering, University of Southern California, Los Angeles, CA 90089 USA (e-mail:\href{mailto:avestimehr@ee.usc.edu}{avestimehr@ee.usc.edu} and \href{mailto:reali@usc.edu}{reali@usc.edu}).}
}
\newtheorem{example}{Example}
\newtheorem{theorem}{{Theorem}}
\newtheorem{lemma}{Lemma}
\newtheorem{definition}{{Definition}}
\begin{document}

\maketitle

\begin{abstract}
We introduce \name, an information-theoretic protocol for polynomial commitment and verification. \textcolor{black}{With the help of a trusted initializer, a succinct commitment to a private polynomial $f$ is provided to the user}. The user then queries the server to obtain evaluations of $f$ at several inputs \textcolor{black}{chosen by the user}. The server provides the evaluations along with proofs of correctness which the user can verify against the initial commitment. \namespace has four main features. Firstly, the user is able to detect, with high probability, if the server has responded with evaluations of the same polynomial initially committed to. Secondly, \namespace provides rigorous privacy guarantees for the server: upon observing the initial commitment and the response provided by the server to $m$ evaluation queries, the user only learns $O(m^2)$ symbols about the coefficients of $f$. Thirdly, the verifiability and the privacy guarantees are unconditional regardless of the computational power of the two parties. Lastly, \namespace is doubly-efficient in the sense that in the evaluation phase, the user runs in $O(\sqrt{d})$ time and the server runs in $ O(d)$ time, where $d-1$ is the degree of the polynomial $f$.
\end{abstract}

\begin{IEEEkeywords}
Verifiable Computing, Information-theoretic Verifiability, Information-theoretic Privacy, Doubly-efficient.  
\end{IEEEkeywords}
\section{Introduction}
Consider a user who wishes to delegate the task of running a proprietary software, such as an advanced optimization or a machine learning algorithm, on his data to an untrusted server. To enhance the credibility of the results, the server sends a proof of correctness along with the outcome of the computation to the user. \textcolor{black}{On the other hand, the user may not be authorized to access the underlying software. Therefore, while the proof must be informative enough to convince the user of the correctness of the result, it should reveal as little information as possible about the underlying program}. To enable this process, one could envision a trusted party who provides the user with a verification key. This party could be, for instance, the developer of the software. While the server may be incentivized to satisfy the customers with as little computational resources as possible, the developer has little reason to provide the users with a false verification key. 

The literature on verifiable computing \cite{goldwasser2008delegating,gennaro2010non,benabbas2011verifiable,fiore2012publicly} and functional commitment protocols \cite{libert2016functional,kate2010constant} allow the server (prover) to generate a proof of correctness of \textcolor{black}{the results which can be efficiently checked by the users (verifiers). Such an algorithm is said to have the {\it soundness} property, if a malicious prover cannot provide a convincing proof of an incorrect evaluation to the verifier. In general, the soundness of these algorithms relies on heuristic hardness assumptions. Our goal in this paper is to design a commitment and verification algorithm that overcomes this limitation.}  Namely, (i) the verifiability property must hold regardless of the computation power of the prover, and (ii) the verifier must not learn more than a constant number of bits about the underlying function, regardless of \textcolor{black}{the verifier's} computational power. In the example above, the prover should be able to keep \textcolor{black}{the} program private from the verifier while at the same time providing him with verifiable results. This verification is performed against an initial commitment by the prover which contains almost no information about the program itself. Once the verifier receives the commitment to a specific function, the prover will only be able to pass the verification process \textcolor{black}{by providing evaluations of the same function that the prover has committed to}. 
Our focus in this work is on functions that can be represented as polynomials. This encompasses a wide range of applications including transaction verification in cryptocurrencies \cite{polyshard, kadhe2019sef, rana2020free2shard, cao2020cover}, verifiable secret sharing \cite{chor1985verifiable,kate2010constant}, and proof-of-storage \cite{zheng2012secure,benet2017proof,fiore2012publicly}.

\subsection{Our contributions}
We propose \name, an information-theoretic protocol for polynomial commitment and verification. \namespace consists of a commitment phase and an evaluation phase. In the commitment phase, a third-party initializer enables the verifier to learn a private commitment to the prover's private polynomial $f$ of degree $d-1$ over a finite field $\mathbb F_q$. 
In the evaluation phase, the verifier requests evaluations of the polynomial $f$ at many input points from the prover. \namespace is verifiable, privacy-preserving and doubly-efficient. More specifically, \namespace has the following salient features.
\begin{enumerate}
    \item {\bf Verifiability.} If the prover commits to a polynomial $f$ in the commitment phase, \textcolor{black}{the prover} must provide evaluations of the same polynomial $f$ in the evaluation phase. Otherwise, the verifier can detect the inconsistency with high probability. On the other hand, if the prover is honest, the verifier accepts the results with probability $1$. 
    
    Moreover, \namespace is adaptive as it remains secure even when the prover is informed about the verification outcome of the previous queries. 
\item {\bf Privacy.} The other facet of \namespace is its privacy-preserving property. Upon observing the initial commitment and the response of the prover to $m$ rounds of evaluation, the verifier only learns $O(m^2)$ symbols over the finite field about the coefficients of the polynomial $f$.

\item {\bf Information-theoretic guarantees.} Both the verifiability and the privacy guarantees of \namespace are information-theoretic, meaning that the verifier and the prover can have unbounded computational power and may arbitrarily deviate from the prescribed protocol. 
\item {\bf Double-efficiency.} For each   evaluation round, the verifier runs in $O(\sqrt{d})$ time, whereas the prover runs in $O(d)$ time. Note that we measure the efficiency of \namespace in the amortized model \cite{gennaro2010non}, where the initial commitment phase is followed by many rounds of evaluations and the one-time cost of the commitment phase can be neglected.
\end{enumerate}

\subsection{Related work}

The closest works in the literature to this work are commitment protocols, including polynomial commitment \cite{kate2010constant,papamanthou2013signatures,ma2013verifiable,bultel2017verifiable}, vector commitment \cite{catalano2013vector}, cryptographic accumulators \cite{benaloh1993one}, and more generally functional commitments \cite{libert2016functional}. These protocols allow a prover to provide a cryptographic commitment to a specific function which reveals little to no information about the underlying function. A verifier can then request the prover to ``open" the commitment at specific locations. The verifier should be able to detect if the opening is inconsistent with the initial commitment. \\
There are several factors that distinguish the model in the current paper from the traditional notion of functional commitment. The {\it first} difference is in the notion of privacy. The linear commitment model in \cite{libert2016functional} requires the initial commitment to reveal no information about the function. However, it does not impose any requirement on how much information is revealed upon observing the evaluations (opening) of the functions along with the initial commitment. On the other hand, the polynomial commitment scheme in \cite{kate2010constant} only requires the {\it evaluations} of the polynomial at unqueried points to remain hidden from the verifier, which is a rather weaker privacy constraint. In contrast, we impose rigorous {\it information-theoretic} constraints on how much information is revealed about the {\it coefficients} of the polynomial upon observing the initial commitment {\it and} the evaluations at a constant number of input points. {\it Secondly}, while the literature on functional commitment is more concerned with the size of the commitment and the size of the opening, we instead focus on designing doubly-efficient algorithms, i.e., algorithms with efficient provers and super-efficient verifiers. Furthermore, unlike the literature on functional commitment, our verifiability guarantee is information-theoretic.

Another relevant concept is zero-knowledge verifiable computation \cite{parno2013pinocchio} and zero-knowledge arguments of knowledge \cite{groth2016size,ben2019aurora,wu2018dizk,sasson2014zerocash}. Such algorithms enable a verifier and a prover to interact to compute $f(v,u)$, where $f$ is a publicly known function, $v$ is the input of the verifier and $u$ is the private input of the prover. The verifier will not learn anything about $u$ except for what is implied through $f(v,u)$. Furthermore, the verifier will be convinced that there exists some $u$ known to the prover such that the computed value corresponds to $f(v,u)$. 
In the context of polynomial evaluation, $v$ could be the input to the polynomial, $u$ the coefficients of the polynomial, and $f$ an operator that maps $(v,u)$ to the evaluation of the polynomial. Unfortunately, this approach does not provide any binding guarantees, as the prover may change \textcolor{black}{the} polynomial for every input.

Other related notions in the literature are oblivious polynomial evaluation \cite{naor2006oblivious,hazay2018oblivious,tassa2013oblivious,tonicelli2015information} which guarantees the privacy of both parties in a semi-honest setting, and verifiable computation \cite{gennaro2010non,fiore2012publicly,sahraei2019interpol,sahraei2021interactive} which guarantees efficient verifiability, but generally ignores the privacy of the prover. \name, however, jointly ensures privacy of the prover and efficient verifiability.
\textcolor{black}{\subsection{Relation with INTERPOL \cite{sahraei2019interpol}}
INTERPOL was presented in \cite{sahraei2019interpol} as an information-theoretic alternative to the existing cryptographic solutions for verifiable polynomial evaluation. In that context, both the verifier and the prover had access to the underlying polynomial. \name, on the other hand, is focused on the notion of polynomial commitment. The new challenge is to keep the underlying polynomial hidden from the verifier, while convincing the verifier that the results conform with an initial commitment. The evolution from INTERPOL to the present work has been outlined in Section \ref{sec:overview}. Section \ref{subsec:basic} can be seen as a variation of INTERPOL which relies on a trusted initializer to eliminate the need for the verifier to access the entire polynomial. Section \ref{subsec: semi-honest verifier} enables privacy against a semi-honest verifier. Finally, Section \ref{subsec: malicious verifier} illustrates how to achieve privacy against a malicious verifier. }

 \textbf{Organization.} The rest of this paper is organized as follows. We provide a brief background and formally present the problem statement  in Section \ref{sec:statement}. An overview of \namespace is provided in Section \ref{sec:overview} and then Section \ref{sec:formal} describes \namespace in detail.
In Section \ref{sec:analysis}, we analyze the correctness, the complexity, the soundness and the privacy guarantees of \name. Finally, in Section \ref{sec:discussion}, we discuss some concluding remarks.

\textcolor{black}{
\section{Preliminaries and Problem Statement}
\label{sec:statement}
In this section, we provide a brief background and describe the polynomial commitment and verification problem.
\subsection{Preliminaries}
\label{subsec:preliminaries}
The following definition and property will be useful in the analysis of \name.
\begin{definition}[Markov Chain] An ordered set of three random variables $(X,Y, Z)\in {\cal X}\times {\cal Y}\times {\cal Z}$ is said to form a Markov chain if 
\begin{align}
    \mathbb{P}(X = x,Z=z|Y = y) = \mathbb{P}(X= x|Y=y)\mathbb{P}(Z=z|Y=y)
\end{align}
for every $(x,y,z)\in {\cal X}\times {\cal Y}\times {\cal Z}$. Intuitively, this means that the two random variables $X$ and $Z$ are independent, conditioned on knowing $Y$.
This Markov chain is represented by $X\longleftrightarrow Y \longleftrightarrow Z$.
\end{definition}
\textbf{Property 1.} If $X\longleftrightarrow Y \longleftrightarrow Z$ forms a Markov chain, then $H_q(X|Y,Z) = H_q(X|Y)$, where $H_q$ represents the Shannon entropy in base $q$.}

\subsection{Polynomial \textcolor{black}{c}ommitment and \textcolor{black}{v}erification}
\label{subsec:statement}
Suppose a prover is in possession of a private polynomial $f(x) = a_0 + a_1x + \cdots + a_{d-1}x^{d-1}$ selected uniformly at random among all polynomials of degree $d-1$ over $\mathbb{F}_q$. A verifier, who only knows the degree of the polynomial, wishes to obtain evaluations of this polynomial at several input points $x\in\mathbb{F}_q$ and verify the correctness of the results in sub-linear time in $d$. Before this evaluation phase, the prover commits to the polynomial $f$ during an initialization phase. Once this commitment is done, the prover is expected to evaluate the same polynomial $f$ for every input $x \in \mathbb F_q$ provided by the verifier. The verifier should be able to detect, efficiently and with high probability, if for any input $x\in\mathbb{F}_q$ the prover returns an incorrect evaluation $y \neq f(x)$. We assume that the verifier is interested in evaluating the polynomial $f$ at many input points, so that the complexity of the commitment phase amortizes over many rounds of the evaluation phase \cite{gennaro2010non}. Therefore, efficiency is only measured with respect to the evaluation phase. Additionally, during the entire commitment and evaluation phases, the verifier should \textcolor{black}{only learn a negligible amount of information about the coefficients of the polynomial $f$}. Note that revealing a constant number of symbols over $\mathbb{F}_q$ about the coefficients of $f$ to the verifier is inevitable, since such amounts of information can be learned by investigating a single pair $(x,f(x))$. Formally, a polynomial commitment and verification protocol consists of the following algorithms.

\begin{itemize}
    \item {\bf Commit$(a_0,\cdots,a_{d-1},K_v,K_p)$}. In the commitment phase, the verifier and the prover engage in a commitment protocol \textbf{Commit}$(a_0,\cdots,a_{d-1},K_v,K_p)$ to help the verifier learn a secret verification key VK which depends on \textcolor{black}{the verifier's} secret key, $K_v$, the prover's secret key, $K_p$, and the polynomial coefficients $a_0,\cdots,a_{d-1}$.
    \item {\bf Eval$(x,a_0,\cdots,a_{d-1},K_p)$}. The verifier requests an evaluation of $f$ at $x$ from the prover. The prover then provides $val = ${\bf Eval}$(x,a_0,\cdots,a_{d-1},K_p)$ to help the verifier evaluate $f(x)$.
    \item {\bf Verify}$(x,val,\text{VK},K_v)$. The verifier checks the correctness of $val$ based on \textcolor{black}{$K_v$ and VK}. If $val$ is incompatible with the initial commitment, \textcolor{black}{the verifier} rejects the evaluation. Otherwise, \textcolor{black}{the verifier} proceeds to recover $f(x)$.
    \item {\bf Recovery$(x,val)$}. If the verification process passes, the verifier will proceed to recover the evaluation $f(x)$ via an algorithm {\bf Recovery}$(x,val)$.
\end{itemize}
We are interested in polynomial commitment and verification protocols that are information-theoretically private and efficiently verifiable as defined next.
\begin{definition}[Information-theoretically Private and Verifiable Polynomial Commitment]
A polynomial commitment and verification protocol is information-theoretically private and verifiable if it satisfies the following properties.
\begin{itemize}
    \item {\bf Correctness}. If the prover follows the {\bf Commit} and the {\bf Eval} algorithms for the same polynomial $f$, then the {\bf Verify} algorithm must return $1$ with probability $1$.
    \item {\bf (Information-theoretic) Soundness}. If the prover commits to a polynomial with coefficients $a_0,\cdots,a_{d-1}$, then the probability that \textcolor{black}{the prover} can pass the verification test with $\hat{val}\neq \text{\bf Eval}(x,a_0,\cdots,a_{d-1},K_p)$ should be negligible, regardless of \textcolor{black}{the prover's} computational power. That is, we must have
    \begin{align}
        \Pr(&\text{\bf Verify}(x,\hat{val},\text{VK},K_v,K_p) = 1, \notag \\ &\hat{val} \neq \text{\bf Eval}(x,a_0,\cdots,a_{d-1},K_p)) = o(1),
    \end{align}
    where the term $o(1)$ vanishes as the finite field size $q$ grows. 
    \item {\bf Efficient Verification}. The two functions {\bf Verify} and {\bf Recovery} must run in sub-linear time in $d$.
    \item {\bf Efficient Evaluation}. The running time of {\bf Eval} must be comparable to the time required for evaluating the polynomial $f$, i.e., linear in $d$.
    \item {\bf (Information-theoretic) Privacy}. After running {\bf Commit} and {\bf Eval} for $m$ different inputs $x_1,\cdots,x_m$, the verifier should only learn $\rho = O(\mathrm{poly}(m))$ symbols over the field about the coefficients of the polynomial $f$, regardless of \textcolor{black}{the verifier's} computation power. Importantly, $\rho$ should be independent of the degree of the polynomial. That is, we must have
    \begin{align}
        &H_q(a_0,\cdots,a_{d-1}|\text{VK}, (x_1,val_1),\cdots,(x_m,val_m) ) \notag \\ &= d - O(\mathrm{poly}(m)).
    \end{align}
\end{itemize}
\label{def:requirements}
\end{definition}

\section{An Overview of \name}
\label{sec:overview}
In this section, we provide an overview of \namespace. We first note that the polynomial $f(x) = a_0 + a_1 x + \cdots + a_{d-1}x^{d-1}$ can be expressed as follows 
\begin{align}
    f(x) = \begin{bmatrix} 1 & x^s & \cdots &x^{s(s-1)}\end{bmatrix}
\mathbfsl A
    \begin{bmatrix}1 & x & \cdots & x^{s-1}\end{bmatrix}^{\mathrm T},
\end{align}
where $s = \sqrt{d}$, and 
\begin{align}
\mathbfsl A =     \begin{bmatrix}a_{0} & a_1 & \cdots & a_{s-1}\\
                    a_s & a_{s+1} & \cdots & a_{2s-1}\\
                    &\ddots &&\\
                    a_{s^2-s} & a_{s^2-s+1} & \cdots & a_{s^2-1} 
    \end{bmatrix}.
    \label{eqn:poly_to_matrix}
\end{align}
\subsection{A basic algorithm}
\label{subsec:basic}
A straightforward algorithm is for the verifier to generate a $c\times s$ matrix $\mathbfsl \Lambda\in \mathbb{F}_q^{c\times s}$ uniformly at random, for some security parameter $c$. In the commitment phase, the verifier and the prover will engage in a commitment protocol through the trusted initializer to help the verifier learn 
\begin{align}
\label{eqn:simple_commitment}
\mathbfsl \Gamma = \mathbfsl \Lambda \mathbfsl A.
\end{align}
In this process, the verifier will not learn anything about $\mathbfsl A$ other than what is revealed through $\mathbfsl \Gamma$, and the prover will not learn anything about $\mathbfsl \Lambda$. \textcolor{black}{For instance, the verifier and the prover respectively provide $\mathbfsl \Lambda$ and $\mathbfsl A$ to the trusted party, and the trusted party sends $\mathbfsl \Gamma$ to the verifier}. In the evaluation phase, the verifier will ask the prover to compute 
\begin{align}
\mathbfsl b = \mathbfsl A \begin{bmatrix}1 & x & \cdots & x^{s-1}\end{bmatrix}^{\mathrm T}.
\end{align}
Suppose the prover responds with $\hat{\mathbfsl b}$. The verifier checks if
\begin{align}
\mathbfsl \Lambda \hat{\mathbfsl b} = \mathbfsl \Gamma  \begin{bmatrix}1 & x & \cdots & x^{s-1}\end{bmatrix}^{\mathrm T},    
\end{align}
which can be done in $O(\sqrt{d})$ time. If $\hat{\mathbfsl b}\neq \mathbfsl b$, this identity holds only with probability $q^{-c}$ \cite{sahraei2019interpol}, since the prover does not know anything about $\mathbfsl \Lambda$.  If the verification process is successful, the verifier can recover 
\begin{align}
\hat{f}(x) = \begin{bmatrix} 1 & x^s & \cdots &x^{s(s-1)}\end{bmatrix} \hat{\mathbfsl b}
\end{align}
in $O(\sqrt{d})$ time and accepts it as the correct evaluation. \textcolor{black}{This algorithm has been previously explored in \cite{sahraei2019interpol}.}\\ The main issue with this algorithm is that $\mathbfsl \Gamma$ reveals $O(\sqrt{d})$ symbols over $\mathbb{F}_q$ about the polynomial $f$ during the commitment phase, and another $\sqrt{d}$ symbols for each evaluation $\mathbfsl b$. We wish to reduce this to a constant. We will accomplish this in two steps: firstly, against a semi-honest verifier in Section \ref{subsec: semi-honest verifier} and next against a malicious verifier in Section \ref{subsec: malicious verifier}.

\subsection{Improved privacy against a semi-honest verifier}
\label{subsec: semi-honest verifier}
In order to improve the privacy of this basic algorithm, the prover will generate a polynomial $g$ of degree $d-1$ uniformly among all polynomials of degree $d-1$ over $\mathbb{F}_q$. In the commitment phase, instead of directly committing to $f$, the prover commits to both $h  = f+ g$ and $g$. Let $\mathbfsl A$ and $\mathbfsl B$ represent the $s\times s$ matrices corresponding to the polynomials $f$ and $g$, constructed similarly to \eqref{eqn:poly_to_matrix}. If both commitments have the same structure as in \eqref{eqn:simple_commitment}, the verifier will be able to gain substantial information about the matrix $\mathbfsl A$ by choosing \textcolor{black}{the} two secret matrices to be linearly dependent. 

To prevent this, the protocol will require the prover to commit to $\mathbfsl A+ \mathbfsl B$ ``from the left" as in \eqref{eqn:simple_commitment}, whereas the commitment to $\mathbfsl B$ will be done ``from the right". More concretely, the verifier and the prover engage in commitment algorithm through the trusted initializer such that the verifier learns 
\begin{align}
\label{eqn:advanced_commitment}
    \mathbfsl \Gamma &= \mathbfsl \Lambda(\mathbfsl A+\mathbfsl B),\nonumber\\
    \mathbfsl \Omega &= \mathbfsl B \mathbfsl \Theta^{\mathrm T},
\end{align}
for two $c\times s$ secret matrices $\mathbfsl \Lambda$ and $\mathbfsl \Theta$ generated \textcolor{black}{uniformly at random} by the verifier. To gain intuition on how this \textcolor{black}{preserves} the privacy of $\mathbfsl A$, it helps to consider the case of $c=1$. Knowing any single linear combination of the rows of $\mathbfsl A+ \mathbfsl B$ and any single linear combination of the columns of $\mathbfsl B$ cannot reveal more than one symbol over $\mathbb{F}_q$ about the elements of the matrix $\mathbfsl A$. This argument will be made precise in the privacy analysis of  \namespace in Section \ref{sec:analysis}. In the evaluation phase, the prover is \textcolor{black}{asked} to compute  
\begin{align}
\mathbfsl v &= (\mathbfsl A+ \mathbfsl B) \begin{bmatrix} 1 & x & \cdots & x^{s-1}\end{bmatrix}^{\mathrm T},\nonumber\\
\mathbfsl u &= \begin{bmatrix} 1 & x^s & \cdots & x^{s(s-1)}\end{bmatrix} \mathbfsl B.
\label{eqn:vector_uv}
\end{align}
Suppose the prover responds with $\hat{\mathbfsl v}$ and $\hat{\mathbfsl u}$. The verifier will check the correctness of each result via two parities 
\begin{align}
\mathbfsl \Gamma \begin{bmatrix} 1 & x & \cdots & x^{s-1}\end{bmatrix}^{\mathrm T} &= \mathbfsl \Lambda \hat{\mathbfsl v},\nonumber\\
    \begin{bmatrix} 1 & x^s & \cdots & x^{s(s-1)}\end{bmatrix} \mathbfsl \Omega &=  \hat{\mathbfsl u} \mathbfsl \Theta^{\mathrm T}.  
    \label{eqn:verify_uv}
\end{align}
If both verification tests are successful, then the verifier recovers 
\begin{align}
\hat{h}(x) &= \begin{bmatrix} 1 & x^s & \cdots & x^{s(s-1)}\end{bmatrix} \hat{\mathbfsl v},\nonumber\\
\hat{g}(x) &= \hat{\mathbfsl u}\begin{bmatrix} 1 & x & \cdots & x^{s-1}\end{bmatrix}^{\mathrm T}.
\label{eqn:recover}
\end{align}
Finally, the verifier will find 
\begin{align}
\hat{f}(x) = \hat{h}(x) - \hat{g}(x)
\end{align}
and will accept $\hat{f}(x)$ as the correct evaluation of $f$ at $x$. This algorithm guarantees the verifiability of the results with high probability following a similar analysis to \cite{sahraei2019interpol}. Nonetheless, the privacy only holds as long as $(\mathbfsl \Lambda, \mathbfsl \Theta)$ are generated \textcolor{black}{uniformly at random over $\mathbb{F}_q^{c\times s}$}, i.e., if the verifier is semi-honest. We will overcome this limitation in the next subsection.

\subsection{Improving privacy against a malicious verifier}
\label{subsec: malicious verifier}
If both parties follow the protocol of Section \ref{subsec: semi-honest verifier} as described, the verifier will only learn a constant number of symbols about the polynomial $f$ upon observing $(\mathbfsl \Gamma,\mathbfsl \Omega)$ and the pair $(\mathbfsl v, \mathbfsl u)$ for a constant number of inputs (see the analysis of privacy in Section \ref{sec:analysis}). However, a malicious verifier may deviate from this protocol by choosing $\mathbfsl \Lambda$ and $\mathbfsl \Theta$ in a deterministic manner as in the following example. 
\begin{example}[Malicious Verifier]~
\label{ex: malicious}
\begin{enumerate}
\item \textbf{Commitment Phase}. Suppose that $c =1$ and the verifier chooses  $\mathbfsl \Lambda = \begin{bmatrix}1 & 0 & \cdots & 0\end{bmatrix}$. This helps \textcolor{black}{the verifier} learn $\mathbfsl \Gamma = \mathbfsl A_1 + \mathbfsl B_1$, the first row of the matrix $\mathbfsl A+ \mathbfsl B$. 
\item \textbf{Evaluation Phase}. In this phase, the verifier requests the $(\mathbfsl v, \mathbfsl u)$ pair for $x=0$. Based on this, \textcolor{black}{the verifier} can learn $\mathbfsl u = \mathbfsl B_1$, which is the first row of the matrix $\mathbfsl B$. \item The verifier then \textcolor{black}{computes} $\mathbfsl \Gamma - \mathbfsl u$ in order to find $\mathbfsl A_1$, which reveals $\sqrt{d}$ symbols about the polynomial $f$.
\end{enumerate}
\end{example}
We will propose a mechanism to resolve this threat. Suppose the verifier requests the $(\mathbfsl v, \mathbfsl u)$ pair for $m$ different inputs $x_1,\cdots,x_m$. Define the matrices $\mathbfsl X$ and $\mathbfsl Y$ as follows 
\begin{align}
    \mathbfsl X &= \begin{bmatrix}
        1 & x_1 & \cdots & x_1^{s-1}\\
        & \cdots &\cdots &\\
        1 & x_m & \cdots & x_m^{s-1}
    \end{bmatrix},\nonumber\\
    \mathbfsl Y&= \begin{bmatrix}
        1 & x_1^s & \cdots & x_1^{s(s-1)}\\
        & \cdots &\cdots &\\
        1 & x_m^s & \cdots & x_m^{s(s-1)}
    \end{bmatrix},
    \label{eqn:bigXY}
\end{align}
and let the matrices $\mathbfsl U$ and $\mathbfsl V$ be the concatenation of the vectors $\mathbfsl u$ and $\mathbfsl v$ for all $x\in\{x_1,\cdots,x_m\}$ as follows 
\begin{align}
    \mathbfsl V &= (\mathbfsl A+ \mathbfsl B) \mathbfsl X^{\mathrm T},\nonumber\\
    \mathbfsl U &= \mathbfsl Y \mathbfsl B.
    \label{eqn:bigUV}
\end{align}
Since after $m$ rounds of evaluation, the verifier learns  $\mathbfsl \Gamma = \mathbfsl \Lambda(\mathbfsl A+ \mathbfsl B)$ and $\mathbfsl U= \mathbfsl Y \mathbfsl B$, the prover must ensure that the matrices $\mathbfsl \Lambda$ and $\mathbfsl Y$ do not contain any linear dependencies among their rows. Otherwise, this linear dependency can be exploited to extract substantial information about the matrix $\mathbfsl A$ as in Example \ref{ex: malicious}. Since $\rank(\mathbfsl Y) = m$ and $\rank(\mathbfsl \Lambda) = c$ (with high probability), the prover must ensure that 
\begin{align}
\rank(\mathbfsl Y|| \mathbfsl \Lambda) = c + m,
\end{align}
where $||$ denotes the vertical concatenation of the two matrices. Similarly, we must have 
\begin{align}
\rank(\mathbfsl X||\mathbfsl \Theta) = c+m.
\end{align}
To accomplish this, we restrict the matrices $\mathbfsl \Lambda$ and $\mathbfsl \Theta$ to have the same Vandermonde structures as the matrices $\mathbfsl Y$ and $\mathbfsl X$, respectively. In other words, $\mathbfsl \Lambda$ and $\mathbfsl \Theta$ must be of the form 
\begin{align}
\mathbfsl \Lambda &=   
\begin{bmatrix}
    1 & \lambda^s_1 & \cdots & \lambda_1^{s(s-1)}\\
     & \cdots & \cdots& \\
    1 & \lambda^s_c & \cdots & \lambda_c^{s(s-1)}
    \end{bmatrix},\nonumber\\
    \mathbfsl \Theta &= \begin{bmatrix}
    1 & \theta_1 & \cdots & \theta_1^{s-1}\\
     & \cdots & \cdots& \\
    1 & \theta_c & \cdots & \theta_c^{s-1}
    \end{bmatrix}.
    \label{eqn:restricted}
\end{align}
Note that as long as 
\begin{align}
|\{\lambda_1,\cdots,\lambda_c,x_1,\cdots,x_m\}| &= |\{\theta_1,\cdots,\theta_c,x_1,\cdots,x_m\}| \notag \\
&= c + m,
\label{eqn:full_xy}
\end{align}
the two rank requirements will be satisfied. Note also that, without loss of generality, we can assume that $\lambda_i \neq \lambda_j$, $\theta_i \neq \theta_j$ and $x_i \neq x_j$ for $i\neq j$, since the verifier cannot benefit from receiving the same evaluation twice. 
Restricting the structure of $\mathbfsl \Lambda$ and $\mathbfsl \Theta$ as opposed to choosing them uniformly at random over $\mathbb{F}_q^{c\times s}$ would mean that the prover now has some side information about these matrices which \textcolor{black}{can be utilized to bypass the verification test}. Fortunately, the analysis in Section \ref{sec:analysis} shows that this probability remains sufficiently small.

It is left to convince the prover that \eqref{eqn:full_xy} holds. For this purpose, we designate a set $S\subseteq \mathbb{F}_q$ of prohibited values from which the elements $\lambda_i$ and $\theta_i$ can be chosen. ``{Prohibited}" means that in the evaluation phase, the verifier is not permitted to delegate the evaluation of the function $f$ at any member of $S$ to the prover. For instance, the verifier could provide an upper-bound $\xi$ on its input values to the prover, and then $S\subseteq \mathbb{F}_q$ could be chosen as a sufficiently large interval whose smallest member is strictly larger than $\xi$.

\subsection{Considerations regarding the size of the field}
The privacy analysis of \namespace in Section \ref{sec:analysis} relies on the fact that the set $\{\alpha^s| \alpha \in S \}$ is of the same size as $S$, where $S$ is the set of prohibited values. This property holds if the function $g(x) = x^s$ is a permutation over $\mathbb{F}_q$ which is the case if $\mathrm{gcd}(s,q-1) =1$. It is also important that there exists a prohibited subset of $\mathbb{F}_q$ of sufficiently large size. To ensure this, either the verifier should set \textcolor{black}{the} upper bound $\xi$ appropriately, or the field size should be increased once the verifier provides the upper-bound. Of course, this adjustment must be done without violating the first property.

\section{Formal Description of \name}
\label{sec:formal}

The formal description of the four functions of \name, namely {\bf Commit}, {\bf Eval}, {\bf Verify} and {\bf Recovery} are provided in Algorithms \ref{Alg:commit},\ref{Alg:eval},\ref{Alg:verify} and \ref{Alg:recovery}, respectively. Theorem \ref{thm:thm} establishes that \namespace satisfies the requirements of an information-theoretically verifiable and private protocol for commitment and verification.
 \begin{algorithm}[H]
\caption[caption]{The commitment function,\\ {\bf Commit}$(a_0,\cdots,a_{d-1},K_v,K_p)$}
\begin{algorithmic}[1]
\Statex {{\bf Input:} The private polynomial coefficients, $a_0,\cdots,a_{d-1}$, the verifier's secret key, $K_v$, the prover's secret key, $K_p$.}
\Statex {\bf Output:} The verifier learns the verification key VK$=(\mathbfsl \Gamma, \mathbfsl \Omega)$.
\Statex{}
    \State The verifier sends an upper-bound $\xi$ on its maximum input to the prover. The two parties agree on a prohibited set $S\subset \mathbb{F}_q$ of size $r(s-1)$, where $s=\sqrt{d}$ and $r$ is a small positive integer (for instance, $r=10)$. Every element of $S$ must be greater than $\xi$.
    \State The prover generates a random polynomial $g$ of degree $d-1$ with a matrix representation $\mathbfsl B$. Let $h = f + g$ and $K_p = \mathbfsl B$.
    \State The verifier chooses $c$ distinct elements of $S$ uniformly at random, denoted as $\lambda_1,\cdots,\lambda_c$, where $c$ is a small positive integer (for instance, $c=10$). Similarly, he chooses $\theta_1,\cdots,\theta_c$. Let $K_v = (\lambda_1,\cdots,\lambda_c,\theta_1,\cdots,\theta_c)$ and define $\mathbfsl \Lambda$ and $\mathbfsl \Theta$ as in \eqref{eqn:restricted}. 
    \State The verifier sends $K_v$ to the trusted initializer. The prover also sends $K_p$ and $a_0, \cdots, a_{d-1}$ to the trusted initializer.
   \State The trusted initializer sends the verification key VK$ = (\mathbfsl \Gamma,\mathbfsl \Omega)$ to the verifier, where $\mathbfsl \Gamma = \mathbfsl \Lambda( \mathbfsl A+ \mathbfsl B)$ and $\mathbfsl \Omega = \mathbfsl B \mathbfsl \Theta^{\mathrm T}$.
\end{algorithmic}
\label{Alg:commit}
\end{algorithm}
\begin{theorem}
\label{theorem:InfoCommit}
\namespace as described by the {\bf Commit}, {\bf Eval}, {\bf Verify} and {\bf Recovery} Algorithms \ref{Alg:commit},\ref{Alg:eval},\ref{Alg:verify} and \ref{Alg:recovery} satisfies the correctness, soundness, privacy and efficiency requirements stated in Definition \ref{def:requirements} as follows. 
\label{thm:thm}
\begin{itemize}
     \item If the prover is honest, the verifier will accept the results with probability $1$.
     \item The probability that the prover can pass the verification process with an incorrect result is negligible, 
    \begin{align}
        \Pr(\text{\bf Verify}(x,\hat{val},\text{VK}, K_v) = 1,\hat{val} \neq {val}) \le \frac{2}{r^{c}}+\frac{1}{r^{2c}},
    \end{align}
    where $r=|S|/(\sqrt{d}-1)$, $S \subset \mathbb F_q$ is the prohibited  set and $x \notin S$.
    \item The verifier only learns $O(m^2)$ symbols about the coefficients of the polynomial $f$ after \textcolor{black}{the verifier} receives $(\text{VK},val_1,\cdots,val_m)$ for any choice of $K_v$ and any $m$ input values $x_1,\cdots,x_m$,
        \begin{align}
       & H_q(a_0,\cdots,a_{d-1}|\text{VK}, (x_1,val_1),\cdots,(x_m,val_m)) \notag \\ &= d - O(m^2).
    \end{align}
    \item The complexity of the verifier per evaluation round is $O(\sqrt{d})$,
    \begin{align}
        {\cal C}_{\text{\bf Verify}} + {\cal C}_{\text{\bf Recovery}} = O(\sqrt{d}),
    \end{align}
where  ${\cal C}_{\text{\bf Verify}}$ and  ${\cal C}_{\text{\bf Recovery}}$ denote the complexity of the {\bf Verify} and the {\bf Recovery} algorithms, respectively.   
    \item The complexity of the prover per evaluation round is $O(d)$,
    \begin{align}
        {\cal C}_{\text{\bf Eval}} = O(d),
    \end{align}
where ${\cal C}_{\text{\bf Eval}}$ denotes the complexity of the {\bf Eval} algorithm.     
\end{itemize}
\end{theorem}

 \begin{algorithm}[H]
\caption[caption]{The evaluation function,\\ {\bf Eval}$(x,a_0,\cdots,a_{d-1},K_p)$}
\begin{algorithmic}[1]
\Statex {{\bf Input:} The coefficients of the polynomial, $a_0,\cdots,a_{d-1}$, the prover's secret key, $K_p = \mathbfsl B$ and the input to the polynomial, $x \notin S$.}
\Statex {{\bf Output:} $val = (\mathbfsl v, \mathbfsl u)$.}

        \Statex The prover evaluates 
\begin{align}
\mathbfsl v &= (\mathbfsl A+ \mathbfsl B) \begin{bmatrix} 1 & x & \cdots & x^{s-1}\end{bmatrix}^{\mathrm T},\nonumber \\
\mathbfsl u &= \begin{bmatrix} 1 & x^s & \cdots & x^{s(s-1)}\end{bmatrix} \mathbfsl B,
\label{eqn:uv_final}
\end{align}
and returns $val =(\mathbfsl v, \mathbfsl u)$ to the verifier.
\end{algorithmic}
\label{Alg:eval}
\end{algorithm}
 
 \begin{algorithm}[H]
\caption[caption]{The verification function, \\ {\bf Verify}$(x,\hat{val},\text{VK},K_v)$}
\begin{algorithmic}[1]
\Statex {{\bf Input:} The input to the polynomial, $x$, the evaluation provided by the prover, $\hat{val} = (\hat{\mathbfsl v},\hat{\mathbfsl u})$, the (secret) verification key, VK$=(\mathbfsl \Gamma, \mathbfsl \Omega)$, and the verifier's secret key, $K_v$.}
\Statex {{\bf Output:} The verifier either accepts or rejects $\hat{val}$.}
    \Statex   The verifier checks the two parities 
    \begin{align}
\mathbfsl \Gamma \begin{bmatrix} 1 & x & \cdots & x^{s-1}\end{bmatrix}^{\mathrm T} &= \mathbfsl \Lambda \hat{\mathbfsl v},\nonumber\\
    \begin{bmatrix} 1 & x^s & \cdots & x^{s(s-1)}\end{bmatrix} \mathbfsl \Omega &=  \hat{\mathbfsl u} \mathbfsl \Theta^{\mathrm T}.  
    \label{eqn:verification_final}
\end{align}
If either parity fails, then {\bf Verify} returns $0$ (rejected), otherwise, it returns $1$.
\end{algorithmic}
\label{Alg:verify}
\end{algorithm}

 \begin{algorithm}[H]
\caption[caption]{The recovery function, {\bf Recovery}$(x,\hat{val})$}
\begin{algorithmic}[1]
\Statex {{\bf Input:} The input to the polynomial, $x$, and the evaluation provided by the prover, $\hat{val} = (\hat{\mathbfsl v},\hat{\mathbfsl u})$.}
\Statex {{\bf Output:}  $\hat{f}(x)$ based on $\hat{val}$.}
        \Statex If {\bf Verify} returns $1$, the verifier will compute $\hat{h}(x)$ and $\hat{g}(x)$ based on 
\begin{align}
\hat{h}(x) &= \begin{bmatrix} 1 & x^s & \cdots & x^{s(s-1)}\end{bmatrix} \hat{\mathbfsl v}\nonumber\\
\hat{g}(x) &= \hat{\mathbfsl u}\begin{bmatrix} 1 & x & \cdots & x^{s-1}\end{bmatrix}^\mathrm{T}
\end{align}
and finds $\hat{f}(x) = \hat{h}(x) - \hat{g}(x)$
\end{algorithmic}
\label{Alg:recovery}
\end{algorithm}

\textcolor{black}{We now illustrate the performance of \name. Fig. \ref{fig:privacy} compares the amount of information revealed to the verifier after $m$ queries against the entropy of the secret polynomial. Fig.  \ref{fig:soundness} illustrates the probability of success of a malicious prover as a function of the parameter $c$, assuming $r=10$ where $r$ is defined in Algorithm \ref{Alg:commit}. Finally, Fig. \ref{fig:complexity} compares the complexity of the verifier relying on \name\space against the complexity of computing $f(x)$ without help. Note that the complexity is only evaluated orderwise, and can be viewed as the number of elementary computations up to a constant multiplicative factor.}

\begin{figure*}[htb!]
\centering
    \subfigure[\footnotesize Privacy of \name.]{\label{fig:privacy}
    \includegraphics[scale=0.35]{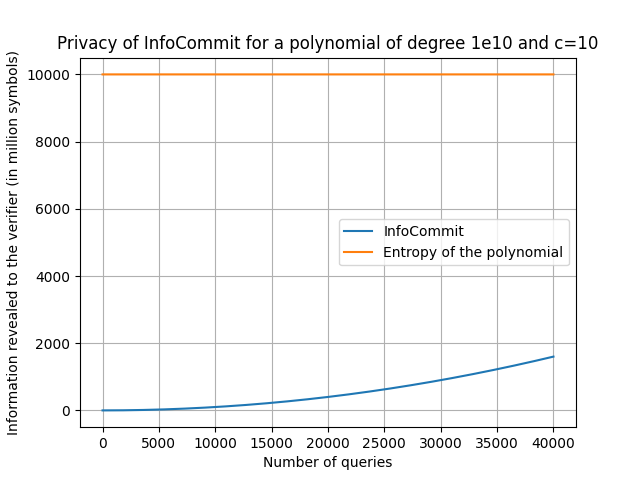}
    }
    \subfigure[\footnotesize Soundness of \name.]{\label{fig:soundness}
    \includegraphics[scale=0.35]{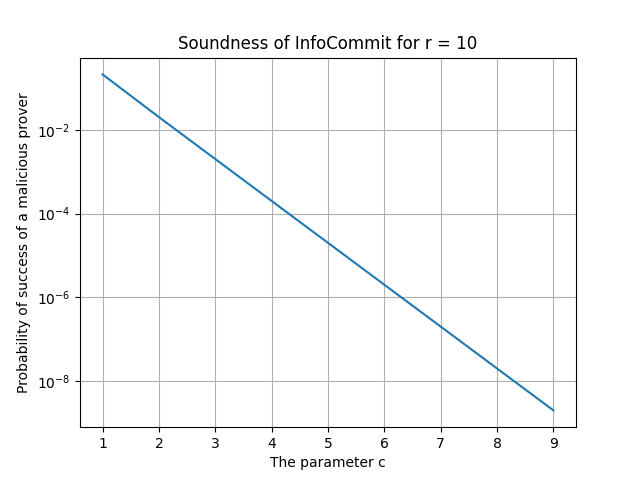}
    }
     \subfigure[\footnotesize Complexity of verifier relying on \name.]{\label{fig:complexity}
    \includegraphics[scale=0.35]{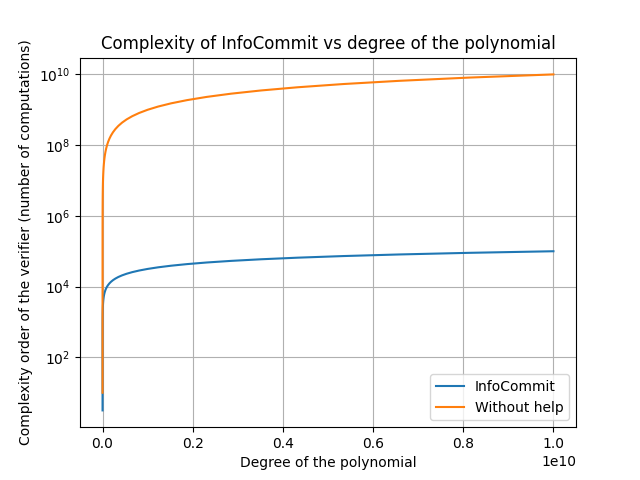}
    }
\vspace{-0.2cm}
\caption{An illustration of the privacy, soundness and complexity of \name.}
\label{fig:key_metrics}
\end{figure*}

\textbf{Adaptivity of \name}. Theorem \ref{theorem:InfoCommit} considers the case where the user does not inform the server whether he accepted or rejected the results of the previous queries. We next study the  effect of such feedback on the soundness of \name. Specifically, we consider a setting with $m$ sequential queries $\{x_1, x_2, \cdots, x_m\}$ such that the user reveals the verification outcome to the server after each round. In such a setting, we show that the soundness guarantee of \namespace changes only slightly as given in Theorem \ref{theorem:Adaptivity InfoCommit}. \textcolor{black}{The proof closely follows the proof of adaptivity of INTERPOL as presented in \cite{sahraei2019interpol}.}
\begin{theorem}[Adaptivity of \name]
\label{theorem:Adaptivity InfoCommit}
In a setting with $m$ sequential queries $\{x_1, x_2, \cdots, x_m\}$ such that the verification outcome is revealed to the prover after each round, we have 
    \begin{align}
        \Pr(\exists &\ell \in[m] \ \text{s.t.} \ \text{\bf Verify}(x_\ell,\hat{val}_\ell,\text{VK}, K_v) = 1,\notag \\ &\hat{val}_\ell \neq {val}_\ell) \le m \left(\frac{2}{r^{c}}+\frac{1}{r^{2c}}\right),
    \end{align}
where $r=|S|/(\sqrt{d}-1)$, $S \subset \mathbb F_q$ is the prohibited  set and $x_\ell \notin S, \ell \in [m]$.
\end{theorem}
\noindent The proof of Theorem \ref{theorem:Adaptivity InfoCommit} is provided in Appendix \ref{appendix: Proof of Adaptivity thm}.

\section{The Analysis of \namespace \\(Proof of Theorem \ref{theorem:InfoCommit})} 
\label{sec:analysis}
In this section, we prove Theorem \ref{theorem:InfoCommit}.

\noindent{\bf Correctness}. If the prover is honest, he will provide the evaluations of the same polynomial $f$ that he initially committed to. Consequently, in the verification phase,  \eqref{eqn:verification_final}  holds and the recovered value is equal to $h(x) - g(x) = f(x)$ as desired.\\
\noindent{\bf Efficiency}. The verification can be done in $O(\sqrt{d})$ since it only requires multiplying $c\times s$ matrices by vectors of length $s = \sqrt{d}$. The recovery can be done in $O(\sqrt{d})$ since the main operation in this phase is inner products of vectors of length $s$. Therefore, the overall complexity of the verifier in the evaluation phase is $O(\sqrt{d})$. The prover, on the other hand, must compute the two vectors $\mathbfsl v$ and $\mathbfsl u$ according to \eqref{eqn:uv_final} which can be done in  $O(d)$.\\
\textcolor{black}{\noindent{\bf Soundness}. The  soundness analysis follows a similar logic as the proof of soundness of INTERPOL \cite{sahraei2019interpol}. Specifically, the analysis relies on the fact that the prover learns nothing about the secret values $\lambda_i,\theta_i$, $i\in[c]$ during the commitment phase. Note however that unlike \name, INTERPOL imposed no structural limitation on the secret matrix ${ \mathbfsl \Lambda}$. We must show that despite this structural limitation, soundness is guaranteed with overwhelming probability.} \\ Without loss of generality, we can assume that the prover has committed to the correct polynomial $f$. If not, we simply use the letter $f$ to denote the polynomial that the prover has committed to and expect the soundness property to hold with respect to this $f$. Note that such a polynomial of degree $d-1$ exists regardless of how the prover responds to the queries in the commitment phase. Remember that for each input $x$, the prover is required to provide the verifier with two vectors $\mathbfsl v$ and $\mathbfsl u$ as defined in \eqref{eqn:uv_final}. Here, we analyze the probability that a malicious prover can pass the verification test \eqref{eqn:verification_final} with a $(\hat{\mathbfsl v}, \hat{\mathbfsl u}) \neq (\mathbfsl v, \mathbfsl u)$ and show that this probability is negligible. Specifically, a malicious prover may respond with (1) $\hat{\mathbfsl v} \neq \mathbfsl v$ and $\hat{\mathbfsl u}=\mathbfsl u$, (2) $\hat{\mathbfsl v}=\mathbfsl v$ and  $\hat{\mathbfsl u} \neq \mathbfsl u$ or (3) $\hat{\mathbfsl v} \neq \mathbfsl v$ and $\hat{\mathbfsl u} \neq \mathbfsl u$. 
Let $({\cal V},{\cal U})$ denote a random variable from which the prover draws the evaluation $(\hat{\mathbfsl v},\hat{\mathbfsl u})$. Note that $({\cal V},{\cal U})$ is independent of $({ \mathbfsl \Lambda},{\mathbfsl \Theta})$. We next define these events 
\begin{align}
&E_1=\{\mathcal V \neq \mathbfsl v: \mathbfsl \Lambda (\mathcal V-\mathbfsl v) = \mathbfsl 0 \}, \\
&E_2=\{ \mathcal U \neq \mathbfsl u: (\mathcal U-\mathbfsl u) \mathbfsl \Theta^{\mathrm T} = \mathbfsl 0\} \ \text{and} \\
&E_{12}=\{ \mathcal V \neq \mathbfsl v, \mathcal U \neq \mathbfsl u: \mathbfsl \Lambda (\mathcal V-\mathbfsl v) = \mathbfsl 0, (\mathcal U-\mathbfsl u) \mathbfsl \Theta^{\mathrm T} = \mathbfsl 0\}.
\end{align}
The probability that a malicious prover can pass the verification then can be upper-bounded as follows 
\begin{align}
&\Pr({\text{\bf Verify}(x,\hat{val},\text{VK},K_v) = 1, \hat{val}\neq val}) \notag \\ 
&\leq \Pr(E_1)+\Pr(E_2)+\Pr(E_{12}).
\end{align}
We first consider the probability of the first event $E_1$ as follows
\begin{align}
\label{eqn:first-parity-verification}
    p_{\mathbfsl v}  \coloneqq \Pr(E_1) = \Pr\Big(\mathbfsl \Lambda ({\cal V}-\mathbfsl v) = {\bf 0} , {\cal V} \neq \mathbfsl v\Big),
\end{align}
where the randomness is with respect to $(\mathbfsl \Lambda,{\cal V})$. We need an upper bound on $p_{\mathbfsl v}$ that holds for any $x$ and any $\mathbfsl A, \mathbfsl B$. Define $\begin{bmatrix}\gamma_0,\cdots,\gamma_{s-1}\end{bmatrix} = {\cal V}-\mathbfsl v$ which can be viewed as the coefficients of a polynomial of degree $s-1$, $\gamma(x) = \gamma_0 + \gamma_1 x + \cdots + \gamma_{s-1}x^{s-1}$. Equation (\ref{eqn:first-parity-verification}) then represents the probability that the prover can provide a nontrivial polynomial of degree $s-1$ such that all $\lambda_i^s$, $i\in [c]$, are the roots of this polynomial. Note that the polynomial $\gamma(x)$ has at most $s-1$ roots in the set $S' = \{y^s|y\in S\}$. Note also that all the terms $\lambda_i^s, i\in[c]$, are distinct. This is because the field size $q$ is chosen in such a way that $\mathrm{gcd}(s,q-1) =1$. As a result, the function $e(x) = x^s$ is a permutation, and the set $S'$ has $r(s-1)$ distinct elements. Let $T\subseteq S'$ be a set of size at most $|T| \le s-1$ that represents the roots of the polynomial $\gamma(x)$ that are in $S'$. In other words, $t\in T$, if and only if $\gamma(t) = 0$ and $t\in S'$. We are interested in the probability that $\lambda_i^s\in T$, $\forall i\in [c]$. Since ${\cal V}$ is independent of $\mathbfsl \Lambda$, so is the set $T$. We can therefore upper-bound $p_{\mathbfsl v}$ as follows 
\begin{align}
    p_{\mathbfsl v} = \Pr(\lambda_i^s \in T, \forall i\in[c])\le \frac{{s-1 \choose c}}{{r(s-1) \choose c}}\le \left(\frac{s-1}{r(s-1)} \right)^c = \frac{1}{r^c}.
\end{align}
Similarly, one can bound the probability of the second event $E_2$ as follows
 \begin{align}
 p_{\mathbfsl u} \coloneqq \Pr(E_2)=\Pr \left( (\cal U-\mathbfsl u) \mathbfsl \Theta^{\mathrm T}=\mathbfsl 0, {\cal U} \neq \mathbfsl u \right)  \le \frac{1}{r^c}.
 \end{align}
 Finally, the probability of $E_{12}$ can be also upper-bounded as  
\begin{align}
p_{\mathbfsl v, \mathbfsl u} \coloneqq \Pr(E_{12})   \leq \frac{1}{r^{2c}}.  
\end{align}
As a result, we have
 \begin{align}
 \Pr({\text{\bf Verify}(x,\hat{val},\text{VK},K_v) = 1,\hat{val}\neq val})  \leq \frac{2}{r^{c}}+\frac{1}{r^{2c}}.
 \end{align}
For instance, by choosing $r=10$ and $c=10$, this probability can be reduced to $\approx 2 \times 10^{-10}$. 
\\

We now study the privacy of \name. We begin with the following useful lemmas that will help us establish the privacy of \name.
\begin{lemma}
Let $c,d,s$ be three positive integers such that $c,d\le s$. Let $\mathbfsl E$ be an $s\times s$ random matrix uniformly distributed over $\mathbb{F}_q^{s\times s}$. Let $\mathbfsl F$ and $\mathbfsl G$ be arbitrary full-rank $c\times s$ and $s\times d$ matrices respectively. Then
\begin{align}
    H_q(\mathbfsl F \mathbfsl E | \mathbfsl E \mathbfsl G) \ge cs - cd.
\end{align}
\label{lem:conditional_matrix_entropy}
\end{lemma}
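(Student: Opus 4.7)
The plan is to establish equality $H_q(FE \mid EG) = cs - cd$ via the chain rule $H_q(FE \mid EG) = H_q(FE, EG) - H_q(EG)$, computing each right-hand term from first principles. Since $E$ is uniform on $\mathbb{F}_q^{s \times s}$, any linear function $\Phi(E)$ of $E$ is uniform on its image, so $H_q(\Phi(E))$ equals $\dim_{\mathbb{F}_q} \mathrm{image}(\Phi) = s^2 - \dim_{\mathbb{F}_q} \ker \Phi$. The whole proof thus reduces to counting two kernel dimensions.

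First I would compute $H_q(EG) = sd$ by observing that, since $G \in \mathbb{F}_q^{s \times d}$ has full column rank, the map $v \mapsto vG$ from $\mathbb{F}_q^s$ to $\mathbb{F}_q^d$ is surjective; hence each of the $s$ rows of $EG$ is uniform and independent on $\mathbb{F}_q^d$. Equivalently, $\dim \ker(E \mapsto EG) = s(s-d)$, so the image has dimension $sd$.

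The crux is computing $H_q(FE, EG)$, for which I need $\dim \ker \Phi$ where $\Phi: E \mapsto (FE, EG)$. The kernel consists of $E$ whose columns lie in $V_F := \ker F$ (of dimension $s - c$) and whose rows lie in the left null space $V_G := \{v \in \mathbb{F}_q^s : vG = 0\}$ (of dimension $s - d$). I would parametrize $E = F_\perp M$, where $F_\perp \in \mathbb{F}_q^{s \times (s-c)}$ is any matrix whose columns form a basis of $V_F$, automatically encoding the column condition. The row condition $F_\perp M G = 0$ then reduces to $MG = 0$, since $F_\perp$ has full column rank. Applying the same row-wise counting argument to $M \in \mathbb{F}_q^{(s-c) \times s}$ yields $(s-c)(s-d)$ free parameters, so $\dim \ker \Phi = (s-c)(s-d)$ and $H_q(FE, EG) = s^2 - (s-c)(s-d) = cs + sd - cd$.

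Subtracting gives $H_q(FE \mid EG) = (cs + sd - cd) - sd = cs - cd$, which in fact establishes the bound with equality. The main technical step is the kernel-dimension computation for $\Phi$; it is not deep, but does rely on the two compatible parametrizations above and is the only place where both full-rank hypotheses on $F$ and $G$ are simultaneously used. The assumption $c, d \le s$ is exactly what ensures $V_F$ and $V_G$ have the claimed dimensions.
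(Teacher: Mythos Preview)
Your proof is correct and in fact establishes the lemma with equality, but it proceeds quite differently from the paper's argument. The paper vectorizes $E$ into a column $L\in\mathbb{F}_q^{s^2}$, rewrites $FE$ and $EG$ as $(\mathbf{I}_{s\times s}\otimes F)L$ and $(G^T\otimes\mathbf{I}_{s\times s})L$, and then invokes a separate appendix lemma (Lemma~\ref{lem:fullrank}) which shows, by an explicit row-elimination argument on the stacked Kronecker matrix $\mathcal{H}=(\mathbf{I}\otimes F)\,\|\,(G^T\otimes\mathbf{I})$, that $\mathrm{rank}(\mathcal{H})\ge (c+d)s-cd$. Your approach bypasses the Kronecker machinery entirely: by parametrizing $\ker\Phi$ as $E=F_\perp M$ and then imposing $MG=0$ row by row, you read off $\dim\ker\Phi=(s-c)(s-d)$ in two lines. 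This is more elementary, makes the role of the two full-rank hypotheses transparent, and yields the exact image dimension $cs+sd-cd$ rather than merely a lower bound; the paper's route, on the other hand, isolates the rank computation as a standalone combinatorial lemma, which could be reused elsewhere but is heavier here than necessary.
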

\begin{proof}
Since $\mathbfsl G$ is a full-rank matrix, then the matrix $\mathbfsl E \mathbfsl G$ is uniformly distributed over $\mathbb{F}_q^{s\times d}$ and as a result, $H_q(\mathbfsl E \mathbfsl G) = sd$. So, we only need to prove that 
\begin{align}
H_q(\mathbfsl F \mathbfsl E,\mathbfsl E \mathbfsl G) \ge cs + sd - cd.
\end{align}
Let $\bell  \in\mathbb{F}_q^{s^2}$ be a column vector obtained from the vertical concatenation of the columns of $\mathbfsl E$, such that $\ell_{i+ sj}  = E_{i,j}$ for all $i\in[0:s-1]$, $j\in[0:s-1]$. Let $\mathbfcal F  = \mathbf{I}_{s\times s}\otimes \mathbfsl F$ and $\mathbfcal G = \mathbfsl G \otimes \mathbf{I}_{s\times s}$, where $\mathbf{I}_{s\times s}$ is the $s\times s$ identity matrix and $\otimes$ denotes the Kronecker product. Since the two vectors $\mathbfcal F \bell$ and $\mathbfcal G \bell$ are rearrangements of the the two matrices $\mathbfsl  F \mathbfsl E$ and $\mathbfsl  E \mathbfsl G$, respectively, we have 
\begin{align}
H_q(\mathbfsl  F \mathbfsl  E, \mathbfsl  E \mathbfsl  G) = H_q(\mathbfcal F \bell,\mathbfcal G \bell).
\end{align}
The proof follows from the fact that the $(c+d)s \times s^2$ matrix $\mathbfcal H = \mathbfcal F||\mathbfcal G$ obtained from vertical concatenation of $\mathbfcal F$ and $\mathbfcal G$ has rank at least $(c+d)s - cd$ as long as $\mathbfsl  F$ and $\mathbfsl  G$ are full-rank (See Lemma \ref{lem:fullrank} in Appendix \ref{appendix:lemma:fullrank}).
\end{proof}
\begin{lemma}
Let $E,F,G$ be three random variables with alphabets ${\cal E}, {\cal F}, {\cal G}$, satisfying the Markov chain $E \longleftrightarrow F \longleftrightarrow G$. Let $h: {\cal E}\times{\cal G}\rightarrow {\cal W}$ be an arbitrary function. Then, we have
\begin{align}
    H(E|F, h(E,G))\ge \min_{g\in {\cal G}} H(E|F,h(E,g)).
\end{align}
\label{lem:replace_by_constant}
\end{lemma}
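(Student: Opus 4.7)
The plan is to introduce $G$ into the conditioning, decompose by the values of $G$, and use the Markov hypothesis to identify each summand with $H(E | F, h(E, g))$; the bound by the minimum is then immediate because an average is at least the minimum.

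Concretely, I would first apply Property~2 (conditioning cannot increase entropy) to get
\[
H(E | F, h(E, G)) \;\ge\; H(E | F, G, h(E, G)).
\]
Since the event $\{G = g\}$ forces $h(E, G) = h(E, g)$, expanding the right-hand side by the value of $G$ gives
\[
H(E | F, G, h(E, G)) \;=\; \sum_{g \in {\cal G}} \mathbb{P}(G = g)\, H(E | F, h(E, g), G = g).
\]
The key identity to establish, for each fixed $g \in {\cal G}$, is
\[
H(E | F, h(E, g), G = g) \;=\; H(E | F, h(E, g)).
\]
Granted this, the claim follows from
\[
H(E | F, h(E, G)) \;\ge\; \sum_{g \in {\cal G}} \mathbb{P}(G = g)\, H(E | F, h(E, g)) \;\ge\; \min_{g \in {\cal G}} H(E | F, h(E, g)).
\]

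The main obstacle is justifying the identity. Applying Property~4 to the Markov chain $E \longleftrightarrow F \longleftrightarrow G$ gives $\mathbb{P}(E = e | F = f, G = g) = \mathbb{P}(E = e | F = f)$, so the conditional distribution of $E$ given $F$ is unchanged by further conditioning on $G = g$; since $h(E, g)$ is a deterministic function of $E$ once $g$ is fixed, the conditional distribution of $\bigl(E, h(E, g)\bigr)$ given $F$ is likewise unaffected. The subtle step is then to conclude that the marginal distribution of $F$ itself is not altered by conditioning on $G = g$; this holds in the intended application of the lemma, where $F$ and $G$ are independent (for instance because $G$ will be a fresh random choice by one party and $F$ is auxiliary information independent of that choice). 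Once this independence is in force, both $H(E | F, h(E, g), G = g)$ and $H(E | F, h(E, g))$ are computed under the same joint law of $(E, F, h(E, g))$, the identity follows, and the plan carries through.
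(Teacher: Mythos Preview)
Your plan coincides step for step with the paper's own proof: add $G$ to the conditioning (Property~2), expand over the values of $G$, remove $G=g$ from the conditioning via the Markov property, and bound the resulting average from below by its minimum. The paper justifies the removal of $G=g$ by observing that $E \longleftrightarrow (F,h(E,g)) \longleftrightarrow G$ is again a Markov chain and then invoking Property~4.

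Where you go beyond the paper is in flagging the passage from $H(E\mid F,h(E,g),G=g)$ to $H(E\mid F,h(E,g))$ as delicate, and your diagnosis is sharper than the paper's. The Markov hypothesis does make the pointwise terms $H(E\mid F=f,h(E,g)=w,G=g)$ and $H(E\mid F=f,h(E,g)=w)$ agree, but the two averaged quantities weight these by $P(F=f\mid G=g)$ and $P(F=f)$ respectively, and these differ unless $F$ and $G$ are independent. Property~4 only delivers the averaged identity $H(E\mid F,h(E,g),G)=H(E\mid F,h(E,g))$, not the pointwise one that both arguments need. In fact the lemma as stated is false: take $G=F$ uniform on $\{0,1\}$; let $E$ given $F=0$ be uniform on $\{0,\dots,2^{n}-1\}$ and $E$ given $F=1$ be uniform on $\{2^{n},\dots,2^{n+1}-1\}$; set $h(e,0)=e$ if $e<2^{n}$ and $0$ otherwise, and $h(e,1)=e$ if $e\ge 2^{n}$ and $0$ otherwise. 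The chain $E\longleftrightarrow F\longleftrightarrow G$ holds trivially, $h(E,G)=E$ always so $H(E\mid F,h(E,G))=0$, yet for each fixed $g$ one has $H(E\mid F,h(E,g))=n/2$.

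Your proposed patch, assuming additionally that $F$ and $G$ are independent, does make the identity (and hence the whole argument) valid. Be aware, though, that in the paper's intended application $F=(\Gamma,\Omega)$ is the commitment and $G$ encodes the verifier's evaluation points, which are explicitly allowed to be chosen \emph{after} observing $(\Gamma,\Omega)$; so the independence you appeal to is not available there, contrary to what you suggest.
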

\begin{proof}
The following chain of inequalities proves the claim
\begin{align}
   &H(E|F, h(E,G))\ge H(E|F,G, h(E,G)) \\
   &= \sum_{g \in {\cal G}}\Pr(G = g)H(E|F,G=g, h(E,g))\\
   &\ge \min_{g\in {\cal G}}H(E|F,G=g, h(E,g))  = \min_{g\in {\cal G}}H(E|F,h(E,g)),
\end{align}
where the last inequality follows from \textcolor{black}{Property 1 in the preliminaries}. Specifically, since  $E \longleftrightarrow F \longleftrightarrow G$ forms a Markov chain, then  $E \longleftrightarrow (F,h(E,g)) \longleftrightarrow G$ forms a Markov chain too, which allows us to drop $G=g$ from conditioning. 
\end{proof}

 \textbf{Privacy}. We now prove that after $m$ evaluations, the verifier learns at most $(m+ c)^2$ symbols over $\mathbb{F}_q$ about the matrix $\mathbfsl A$ which represents the polynomial $f$.  The analysis of privacy relies on the fact that the verifier learns nothing about $(\mathbfsl A+ \mathbfsl B, \mathbfsl B)$ in the commitment phase except for what is implied via $(\mathbfsl \Gamma,\mathbfsl \Omega)$. The analysis also makes use of the fact that the $\lambda_i$ and $\theta_i$ values in the commitment phase are chosen from a ``prohibited" set. As a result of this, the matrices $\mathbfsl \Lambda || \mathbfsl Y$ and $\mathbfsl \Theta|| \mathbfsl X$ will be full-rank. Note that the verifier cannot benefit from selecting $x_i = x_j$, i.e., requesting the same evaluation point twice. Similarly, he cannot benefit from setting $\lambda_i = \lambda_j$ or $\theta_i= \theta_j$. Therefore, without loss of generality, we can assume that $\rank(\mathbfsl \Lambda ||\mathbfsl Y) = \rank(\mathbfsl \Theta||\mathbfsl X) = m + c$. \\ For $(\mathbfsl X, \mathbfsl Y, \mathbfsl V, \mathbfsl U)$ defined as in \eqref{eqn:bigXY},\eqref{eqn:bigUV}, we want to show that
\begin{align}
    H_q(\mathbfsl A| \mathbfsl V, \mathbfsl U,\mathbfsl \Gamma,\mathbfsl \Omega) \ge d - (m + c)^2,
    \label{eqn:mutual_bound}
\end{align}
for any choice of the evaluation points $x_1,\cdots,x_m$ and for any choice of $\theta_1,\cdots,\theta_c,$ $\lambda_1,\cdots,\lambda_c$. We start by simplifying the left-hand side of \eqref{eqn:mutual_bound} as follows
\small
\begin{align}
    &H_q(\mathbfsl A| \mathbfsl V, \mathbfsl U,\mathbfsl \Gamma,\mathbfsl \Omega) = H_q(\mathbfsl A| (\mathbfsl A+\mathbfsl B) \mathbfsl X^{\mathrm T},\mathbfsl Y \mathbfsl B,\mathbfsl \Lambda(\mathbfsl A+ \mathbfsl B),\mathbfsl B \mathbfsl \Theta^{\mathrm T})\nonumber\\
        &= H_q(\mathbfsl A,(\mathbfsl A+ \mathbfsl B) \mathbfsl X^{\mathrm T},\mathbfsl \Lambda (\mathbfsl A+ \mathbfsl B)| (\mathbfsl A+\mathbfsl B) \mathbfsl X^{\mathrm T},\mathbfsl Y \mathbfsl B,\mathbfsl \Lambda(\mathbfsl A+ \mathbfsl B),\mathbfsl B \mathbfsl \Theta^{\mathrm T}) \nonumber\\
    &= H_q(\mathbfsl A,\mathbfsl B \mathbfsl X^{\mathrm T},\mathbfsl \Lambda \mathbfsl B| (\mathbfsl A+ \mathbfsl B) \mathbfsl X^{\mathrm T},\mathbfsl Y \mathbfsl B,\mathbfsl \Lambda(\mathbfsl A+ \mathbfsl B),\mathbfsl B \mathbfsl \Theta^{\mathrm T}) \nonumber\\
        &= H_q(\mathbfsl B \mathbfsl X^{\mathrm T}, \mathbfsl \Lambda \mathbfsl B| (\mathbfsl A+\mathbfsl B)\mathbfsl X^{\mathrm T},\mathbfsl Y \mathbfsl B,\mathbfsl \Lambda(\mathbfsl A+ \mathbfsl B),\mathbfsl B \mathbfsl \Theta^{\mathrm T})\nonumber\\
        &+ H_q(\mathbfsl A|\mathbfsl B \mathbfsl X^{\mathrm T}, \mathbfsl \Lambda \mathbfsl B,(\mathbfsl A+\mathbfsl B) \mathbfsl X^{\mathrm T},\mathbfsl Y \mathbfsl B,\mathbfsl \Lambda(\mathbfsl A+ \mathbfsl B),\mathbfsl B \mathbfsl \Theta^{\mathrm T})\nonumber\\
    &= H_q(\mathbfsl B \mathbfsl X^{\mathrm T}, \mathbfsl \Lambda \mathbfsl B| (\mathbfsl A+ \mathbfsl B)\mathbfsl X^{\mathrm T},\mathbfsl Y \mathbfsl B,\mathbfsl \Lambda(\mathbfsl A+ \mathbfsl B),\mathbfsl B \mathbfsl \Theta^{\mathrm T})+ \notag \\ & H_q(\mathbfsl A|\mathbfsl A \mathbfsl X^{\mathrm T},\mathbfsl \Lambda \mathbfsl A),
\end{align}
\normalsize
where the last equality follows from the fact that $\mathbfsl A$ is independent of $\mathbfsl B$.
The second term in the final expression can be easily bounded as
\begin{align}
    H_q(\mathbfsl A|\mathbfsl A \mathbfsl X^{\mathrm T},\mathbfsl \Lambda \mathbfsl A) &= H_q(\mathbfsl A) - H_q(\mathbfsl A \mathbfsl X^{\mathrm T},\mathbfsl \Lambda \mathbfsl A)\nonumber\\
    &\ge H_q(\mathbfsl A) - H_q(\mathbfsl A \mathbfsl X^{\mathrm T}) - H_q(\mathbfsl \Lambda \mathbfsl A) \nonumber \\ 
    &= d - ms - cs. 
\end{align}
Therefore, we must show that the first term satisfies 
\begin{align}
&H_q(\mathbfsl B \mathbfsl X^\mathrm{T}, \mathbfsl \Lambda \mathbfsl B| (\mathbfsl A+\mathbfsl B)\mathbfsl X^\mathrm{T},\mathbfsl Y \mathbfsl B,\mathbfsl \Lambda(\mathbfsl A+\mathbfsl B),\mathbfsl B \mathbfsl \Theta^\mathrm{T}) \notag \\ &\ge ms + cs- (m + c)^2.
\end{align} 
Observe that $\Big((\mathbfsl A+\mathbfsl B)\mathbfsl X^\mathrm{T},\mathbfsl \Lambda(\mathbfsl A+\mathbfsl B)\Big)$
is independent of $\Big(\mathbfsl B \mathbfsl X^\mathrm{T},\mathbfsl \Lambda \mathbfsl B, \mathbfsl Y \mathbfsl B, \mathbfsl B \mathbfsl \Theta^\mathrm{T}\Big)$. This follows as $\mathbfsl A+\mathbfsl B$ is independent of $\mathbfsl B$. Hence, \textcolor{black}{following basic properties of the entropy function, we have} 
\begin{align}
&H_q(\mathbfsl B \mathbfsl X^\mathrm{T}, \mathbfsl \Lambda \mathbfsl B| (\mathbfsl A+ \mathbfsl B) \mathbfsl X^\mathrm{T},\mathbfsl Y \mathbfsl B,\mathbfsl \Lambda(\mathbfsl A+ \mathbfsl B),\mathbfsl B \mathbfsl \Theta^\mathrm{T}) \notag \\&= H_q(\mathbfsl B \mathbfsl X^\mathrm{T}, \mathbfsl \Lambda \mathbfsl B|\mathbfsl Y \mathbfsl B,\mathbfsl B \mathbfsl \Theta^\mathrm{T}).
\end{align}
It remains to prove that 
\begin{align}
H_q(\mathbfsl B \mathbfsl X^\mathrm{T}, \mathbfsl \Lambda \mathbfsl B|\mathbfsl Y \mathbfsl B,\mathbfsl B \mathbfsl \Theta^\mathrm{T}) \ge ms + cs - (m + c)^2.
\end{align}
Note that
\begin{align}
&H_q(\mathbfsl B \mathbfsl X^{\mathrm T}, \mathbfsl \Lambda \mathbfsl B|\mathbfsl Y \mathbfsl B,\mathbfsl B \mathbfsl \Theta^{\mathrm T}) \notag \\&= H_q(\mathbfsl B \mathbfsl X^{\mathrm T}, \mathbfsl \Lambda \mathbfsl B,\mathbfsl Y \mathbfsl B,\mathbfsl B \mathbfsl \Theta^{\mathrm T})- H_q(\mathbfsl Y \mathbfsl B,\mathbfsl B \mathbfsl \Theta^{\mathrm T})\nonumber\\
&= H_q(\mathbfsl B \mathbfsl X^{\mathrm T},\mathbfsl B \mathbfsl \Theta^\mathrm{T}) + H_q(\mathbfsl \Lambda \mathbfsl B,\mathbfsl Y \mathbfsl B|\mathbfsl B \mathbfsl X^{\mathrm T},\mathbfsl B \mathbfsl \Theta^{\mathrm T})  \notag \\ &-H_q(\mathbfsl Y \mathbfsl B,\mathbfsl B \mathbfsl \Theta^{\mathrm T}).
\end{align}
Due to the fact that $x_i$ and $\theta_i$ values are chosen from two different sets, we know that $\mathbfsl \Theta|| \mathbfsl X$ is a full-rank matrix. Therefore, we have
\begin{align}
H_q( \mathbfsl B \mathbfsl X^{\mathrm T},\mathbfsl B \mathbfsl \Theta^{\mathrm T}) = (c+m)s.
\end{align}
Similarly, the matrix $\mathbfsl \Lambda||\mathbfsl Y$ is full-rank, thus by Lemma \ref{lem:conditional_matrix_entropy} 
\begin{align}
H_q(\mathbfsl \Lambda \mathbfsl B, \mathbfsl Y \mathbfsl B|\mathbfsl B \mathbfsl X^{\mathrm T},\mathbfsl B \mathbfsl \Theta^{\mathrm T}) \ge (c+m)s  - (c+m)^2.    
\end{align}
Also, we have
\begin{align}
H_q(\mathbfsl Y \mathbfsl B,\mathbfsl B \mathbfsl \Theta^{\mathrm T})\le H_q(\mathbfsl Y \mathbfsl B) + H_q(\mathbfsl B \mathbfsl \Theta^{\mathrm T}) = ms + cs.
\end{align}
Therefore, we have
\begin{align}
H_q(\mathbfsl B \mathbfsl X^{\mathrm T}, \mathbfsl \Lambda \mathbfsl B|\mathbfsl Y \mathbfsl B, \mathbfsl B \mathbfsl \Theta^{\mathrm T}) \ge ms + cs - (m+c)^2.
\end{align}
This gives us the desired inequality $H_q(\mathbfsl A|\mathbfsl V, \mathbfsl U,\mathbfsl \Gamma,\mathbfsl \Omega)\ge d- (m + c)^2$. \\
We proved that $H_q(\mathbfsl A| (\mathbfsl A+ \mathbfsl B) \mathbfsl X^{\mathrm T},\mathbfsl Y \mathbfsl B,\mathbfsl \Gamma,\mathbfsl \Omega) \ge d - (m+c)^2$, for {\it every} choice of the evaluation points $x_1,\cdots, x_m$. Note that in general, the verifier may choose the evaluation points {\it after} observing the commitment ${(\mathbfsl \Gamma,\mathbfsl \Omega)}$. In other words, rather than assuming $(\mathbfsl X, \mathbfsl Y)$ are arbitrary constants, we must treat them as random variables  satisfying the following Markov chain
\begin{align}
(\mathbfsl A, \mathbfsl B)\longleftrightarrow (\mathbfsl \Gamma,\mathbfsl \Omega) \longleftrightarrow (\mathbfsl X, \mathbfsl Y).
\end{align}
But thanks to Lemma \ref{lem:replace_by_constant}, since $(\mathbfsl \Gamma, \mathbfsl \Omega)$ appear in the conditioning, we have
\begin{align}
    &H_q(\mathbfsl A| (\mathbfsl A+ \mathbfsl B) \mathbfsl X^{\mathrm T},\mathbfsl Y \mathbfsl B,\mathbfsl \Gamma,\mathbfsl \Omega) \ge \notag \\
    &\min_{x_1,\cdots,x_m} H_q(\mathbfsl A| (\mathbfsl A+ \mathbfsl B)\begin{bmatrix}
        1 & x_1 & \cdots & x_1^{s-1}\\
        & \cdots &\cdots & \notag \\
        1 & x_m & \cdots & x_m^{s-1}
    \end{bmatrix}^\mathrm{T}, \notag \\ &\begin{bmatrix}
        1 & x_1^s & \cdots & x_1^{s(s-1)}\\
        & \cdots &\cdots &\\ 
        1 & x_m^s & \cdots & x_m^{s(s-1)}
    \end{bmatrix} \mathbfsl B,\mathbfsl \Gamma,\mathbfsl \Omega).
\end{align}
Therefore, the analysis we provided also addresses the case where the evaluation points are chosen adaptively, after observing the commitment. 
\section{Conclusions}
\label{sec:discussion}
In this work, we have developed \name, an information-theoretic protocol for polynomial commitment and verification. Specifically, we have considered a setting with an untrusted server (prover) hosting a private polynomial $f$ of degree $d-1$ and a user (verifier) who wishes to obtain evaluations of $f$. \namespace consists of a commitment phase, in which the user learns a private commitment to the prover's polynomial, and an evaluation phase. \namespace only requires a trusted third-party in the commitment phase, provides unconditional privacy guarantees for the server and unconditional verifiability for the user irrespective of their computational power. Moreover, \namespace is doubly-efficient with an efficient prover of complexity $O(d)$ and a super-efficient verifier with complexity of $O(\sqrt{d})$. 
\section*{Acknowledgement}
The authors would like to thank Mohammad Maddah-Ali, Srivatsan Ravi and Ali Rahimi for the fruitful discussions and for revising the manuscript. This material is based upon work supported by  ARO award W911NF1810400, NSF grants CCF-1703575 and CCF-1763673, ONR Award No. N00014-16-1-2189 and research gifts from Intel, Cisco, and Qualcomm.

\bibliographystyle{IEEEtran}
\bibliography{main.bib}

\begin{thebibliography}{10}
\providecommand{\url}[1]{#1}
\csname url@samestyle\endcsname
\providecommand{\newblock}{\relax}
\providecommand{\bibinfo}[2]{#2}
\providecommand{\BIBentrySTDinterwordspacing}{\spaceskip=0pt\relax}
\providecommand{\BIBentryALTinterwordstretchfactor}{4}
\providecommand{\BIBentryALTinterwordspacing}{\spaceskip=\fontdimen2\font plus
\BIBentryALTinterwordstretchfactor\fontdimen3\font minus
  \fontdimen4\font\relax}
\providecommand{\BIBforeignlanguage}[2]{{%
\expandafter\ifx\csname l@#1\endcsname\relax
\typeout{** WARNING: IEEEtran.bst: No hyphenation pattern has been}%
\typeout{** loaded for the language `#1'. Using the pattern for}%
\typeout{** the default language instead.}%
\else
\language=\csname l@#1\endcsname
\fi
#2}}
\providecommand{\BIBdecl}{\relax}
\BIBdecl

\bibitem{goldwasser2008delegating}
S.~Goldwasser, Y.~T. Kalai, and G.~N. Rothblum, ``Delegating computation:
  interactive proofs for muggles,'' in \emph{Proceedings of the fortieth annual
  ACM symposium on Theory of computing}.\hskip 1em plus 0.5em minus 0.4em\relax
  ACM, 2008, pp. 113--122.

\bibitem{gennaro2010non}
R.~Gennaro, C.~Gentry, and B.~Parno, ``Non-interactive verifiable computing:
  Outsourcing computation to untrusted workers,'' in \emph{Annual Cryptology
  Conference}.\hskip 1em plus 0.5em minus 0.4em\relax Springer, 2010, pp.
  465--482.

\bibitem{benabbas2011verifiable}
S.~Benabbas, R.~Gennaro, and Y.~Vahlis, ``Verifiable delegation of computation
  over large datasets,'' in \emph{Annual Cryptology Conference}.\hskip 1em plus
  0.5em minus 0.4em\relax Springer, 2011, pp. 111--131.

\bibitem{fiore2012publicly}
D.~Fiore and R.~Gennaro, ``Publicly verifiable delegation of large polynomials
  and matrix computations, with applications,'' in \emph{Proceedings of the
  2012 ACM conference on Computer and communications security}.\hskip 1em plus
  0.5em minus 0.4em\relax ACM, 2012, pp. 501--512.

\bibitem{libert2016functional}
B.~Libert, S.~C. Ramanna, and M.~Yung, ``Functional commitment schemes: From
  polynomial commitments to pairing-based accumulators from simple
  assumptions,'' in \emph{43rd International Colloquium on Automata, Languages,
  and Programming (ICALP 2016)}.\hskip 1em plus 0.5em minus 0.4em\relax Schloss
  Dagstuhl-Leibniz-Zentrum fuer Informatik, 2016.

\bibitem{kate2010constant}
A.~Kate, G.~M. Zaverucha, and I.~Goldberg, ``Constant-size commitments to
  polynomials and their applications,'' in \emph{International Conference on
  the Theory and Application of Cryptology and Information Security}.\hskip 1em
  plus 0.5em minus 0.4em\relax Springer, 2010, pp. 177--194.

\bibitem{polyshard}
S.~Li, M.~Yu, C.-S. Yang, A.~S. Avestimehr, S.~Kannan, and P.~Viswanath,
  ``Polyshard: Coded sharding achieves linearly scaling efficiency and security
  simultaneously,'' \emph{IEEE Transactions on Information Forensics and
  Security}, vol.~16, pp. 249--261, 2020.

\bibitem{kadhe2019sef}
S.~Kadhe, J.~Chung, and K.~Ramchandran, ``Sef: A secure fountain architecture
  for slashing storage costs in blockchains,'' \emph{arXiv preprint
  arXiv:1906.12140}, 2019.

\bibitem{rana2020free2shard}
R.~Rana, S.~Kannan, D.~Tse, and P.~Viswanath, ``Free2shard:
  Adaptive-adversary-resistant sharding via dynamic self allocation,''
  \emph{arXiv preprint arXiv:2005.09610}, 2020.

\bibitem{cao2020cover}
S.~Cao, S.~Kadhe, and K.~Ramchandran, ``Cover: Collaborative light-node-only
  verification and data availability for blockchains,'' in \emph{2020 IEEE
  International Conference on Blockchain (Blockchain)}, pp. 45--52.

\bibitem{chor1985verifiable}
B.~Chor, S.~Goldwasser, S.~Micali, and B.~Awerbuch, ``Verifiable secret sharing
  and achieving simultaneity in the presence of faults,'' in \emph{26th Annual
  Symposium on Foundations of Computer Science (sfcs 1985)}.\hskip 1em plus
  0.5em minus 0.4em\relax IEEE, 1985, pp. 383--395.

\bibitem{zheng2012secure}
Q.~Zheng and S.~Xu, ``Secure and efficient proof of storage with
  deduplication,'' in \emph{Proceedings of the second ACM conference on Data
  and Application Security and Privacy}.\hskip 1em plus 0.5em minus 0.4em\relax
  ACM, 2012, pp. 1--12.

\bibitem{benet2017proof}
J.~Benet, D.~Dalrymple, and N.~Greco, ``Proof of replication,'' \emph{Protocol
  Labs}, 2017.

\bibitem{papamanthou2013signatures}
C.~Papamanthou, E.~Shi, and R.~Tamassia, ``Signatures of correct computation,''
  in \emph{Theory of Cryptography Conference}.\hskip 1em plus 0.5em minus
  0.4em\relax Springer, 2013, pp. 222--242.

\bibitem{ma2013verifiable}
X.~Ma, F.~Zhang, and J.~Li, ``Verifiable evaluation of private polynomials,''
  in \emph{2013 Fourth International Conference on Emerging Intelligent Data
  and Web Technologies}.\hskip 1em plus 0.5em minus 0.4em\relax IEEE, 2013, pp.
  451--458.

\bibitem{bultel2017verifiable}
X.~Bultel, M.~L. Das, H.~Gajera, D.~G{\'e}rault, M.~Giraud, and P.~Lafourcade,
  ``Verifiable private polynomial evaluation,'' in \emph{International
  Conference on Provable Security}.\hskip 1em plus 0.5em minus 0.4em\relax
  Springer, 2017, pp. 487--506.

\bibitem{catalano2013vector}
D.~Catalano and D.~Fiore, ``Vector commitments and their applications,'' in
  \emph{International Workshop on Public Key Cryptography}.\hskip 1em plus
  0.5em minus 0.4em\relax Springer, 2013, pp. 55--72.

\bibitem{benaloh1993one}
J.~Benaloh and M.~De~Mare, ``One-way accumulators: A decentralized alternative
  to digital signatures,'' in \emph{Workshop on the Theory and Application of
  of Cryptographic Techniques}.\hskip 1em plus 0.5em minus 0.4em\relax
  Springer, 1993, pp. 274--285.

\bibitem{parno2013pinocchio}
B.~Parno, J.~Howell, C.~Gentry, and M.~Raykova, ``Pinocchio: Nearly practical
  verifiable computation,'' in \emph{2013 IEEE Symposium on Security and
  Privacy}, pp. 238--252.

\bibitem{groth2016size}
J.~Groth, ``On the size of pairing-based non-interactive arguments,'' in
  \emph{Annual International Conference on the Theory and Applications of
  Cryptographic Techniques}.\hskip 1em plus 0.5em minus 0.4em\relax Springer,
  2016, pp. 305--326.

\bibitem{ben2019aurora}
E.~Ben-Sasson, A.~Chiesa, M.~Riabzev, N.~Spooner, M.~Virza, and N.~P. Ward,
  ``Aurora: Transparent succinct arguments for r1cs,'' in \emph{Annual
  International Conference on the Theory and Applications of Cryptographic
  Techniques}.\hskip 1em plus 0.5em minus 0.4em\relax Springer, 2019, pp.
  103--128.

\bibitem{wu2018dizk}
H.~Wu, W.~Zheng, A.~Chiesa, R.~A. Popa, and I.~Stoica, ``{DIZK}: A distributed
  zero knowledge proof system,'' in \emph{27th USENIX Security Symposium},
  2018, pp. 675--692.

\bibitem{sasson2014zerocash}
E.~B. Sasson, A.~Chiesa, C.~Garman, M.~Green, I.~Miers, E.~Tromer, and
  M.~Virza, ``Zerocash: Decentralized anonymous payments from bitcoin,'' in
  \emph{2014 IEEE Symposium on Security and Privacy}, pp. 459--474.

\bibitem{naor2006oblivious}
M.~Naor and B.~Pinkas, ``Oblivious polynomial evaluation,'' \emph{SIAM Journal
  on Computing}, vol.~35, no.~5, pp. 1254--1281, 2006.

\bibitem{hazay2018oblivious}
C.~Hazay, ``Oblivious polynomial evaluation and secure set-intersection from
  algebraic prfs,'' \emph{Journal of Cryptology}, vol.~31, no.~2, pp. 537--586,
  2018.

\bibitem{tassa2013oblivious}
T.~Tassa, A.~Jarrous, and Y.~Ben-Ya'akov, ``Oblivious evaluation of
  multivariate polynomials,'' \emph{Journal of Mathematical Cryptology},
  vol.~7, no.~1, pp. 1--29, 2013.

\bibitem{tonicelli2015information}
R.~Tonicelli, A.~C. Nascimento, R.~Dowsley, J.~M{\"u}ller-Quade, H.~Imai,
  G.~Hanaoka, and A.~Otsuka, ``Information-theoretically secure oblivious
  polynomial evaluation in the commodity-based model,'' \emph{International
  Journal of Information Security}, vol.~14, no.~1, pp. 73--84, 2015.

\bibitem{sahraei2019interpol}
S.~Sahraei and A.~S. Avestimehr, ``{INTERPOL:} information theoretically
  verifiable polynomial evaluation,'' in \emph{IEEE International Symposium on
  Information Theory (ISIT)}, 2019.

\bibitem{sahraei2021interactive}
S.~Sahraei, M.~A. Maddah-Ali, and S.~Avestimehr, ``Interactive verifiable
  polynomial evaluation,'' \emph{IEEE Journal on Selected Areas in Information
  Theory}, 2021.

\end{thebibliography}

\appendices
 
\section{Lemma \ref{lem:fullrank}}
\label{appendix:lemma:fullrank}
\begin{lemma}
\label{lem:fullrank}
Let $c,d,s$ be three positive integers such that $c,d\le s$. Let $\mathbfsl F$ and $\mathbfsl G$ be two full-rank $c\times s$ and $d\times s$ matrices over $\mathbb F_q$, respectively. Let $\mathbfcal F =\mathbfsl{I}_{s\times s}\otimes \mathbfsl F$, $\mathbfcal G = {\mathbfsl G}\otimes \mathbfsl{I}_{s\times s}$ and let $\mathbfcal H = {\mathbfcal F}|| {\mathbfcal G}$ be the $(c+d)s \times s^2$ matrix obtained from vertical concatenation of ${\mathbfcal F}$ and ${\mathbfcal G}$. Then, we have
\begin{align}
\rank({\mathbfcal H})\ge (c+d)s - cd.
\end{align}
\end{lemma}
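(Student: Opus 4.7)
\textbf{Proof plan for Lemma \ref{lem:fullrank}.} My plan is to compute the dimension of the null space of ${\cal H}$ rather than work directly with its rank. A vector $L\in\mathbb{F}_q^{s^2}$ corresponds bijectively to a matrix $E\in\mathbb{F}_q^{s\times s}$ via the column-stacking relation $L_{i+sj}=E_{i,j}$. Under this correspondence, the Kronecker identity $(A\otimes B)\,\mathrm{vec}(E)=\mathrm{vec}(BEA^T)$ shows that ${\cal F}L={\rm vec}(FE)$ and ${\cal G}L={\rm vec}(EG^T)$. Hence $L\in\ker({\cal H})$ if and only if $FE=0$ and $EG^T=0$. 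Let $\mathcal{N}$ denote the subspace of such matrices $E$; by rank-nullity, proving $rank({\cal H})\ge (c+d)s-cd$ reduces to proving $\dim\mathcal{N}\le (s-c)(s-d)$.

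Next, I would interpret $\mathcal{N}$ geometrically: $FE=0$ forces every column of $E$ to lie in $\ker(F)$, and $EG^T=0$ forces every row of $E$ to lie in $\ker(G)$. Since $F$ has full row rank $c$ and $G$ has full row rank $d$, these kernels have dimensions $s-c$ and $s-d$ respectively. Pick matrices $P\in\mathbb{F}_q^{s\times(s-c)}$ and $Q\in\mathbb{F}_q^{s\times(s-d)}$ whose columns form bases of $\ker(F)$ and $\ker(G)$. I claim that the assignment $M\mapsto PMQ^T$ is a linear isomorphism between $\mathbb{F}_q^{(s-c)\times(s-d)}$ and $\mathcal{N}$.

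To verify the claim, containment in $\mathcal{N}$ is immediate from $FP=0$ and $GQ=0$. For the converse, given $E\in\mathcal{N}$, the condition $FE=0$ together with the fact that the columns of $P$ form a basis of $\ker(F)$ yields a unique $N\in\mathbb{F}_q^{(s-c)\times s}$ with $E=PN$. Substituting into $EG^T=0$ and using injectivity of left multiplication by the full-column-rank matrix $P$ gives $NG^T=0$, so each row of $N$ lies in $\ker(G)$, producing a unique $M\in\mathbb{F}_q^{(s-c)\times(s-d)}$ with $N=MQ^T$. This establishes the isomorphism, so $\dim\mathcal{N}=(s-c)(s-d)$, and consequently $rank({\cal H})=s^2-(s-c)(s-d)=(c+d)s-cd$.

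The only mildly delicate step is the bookkeeping of the Kronecker identities: one must be careful that $\mathbf{I}_{s\times s}\otimes F$ acts on the stacked $E$ as left multiplication by $F$ while $G\otimes\mathbf{I}_{s\times s}$ acts as right multiplication by $G^T$. Once that correspondence is nailed down, the rest is a direct dimension count using the factorization $E=PMQ^T$, and the bound actually comes out as an equality rather than just an inequality.
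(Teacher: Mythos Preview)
Your argument is correct and in fact sharper than the paper's: you obtain $rank({\cal H})=(c+d)s-cd$ exactly, not merely the lower bound. The route, however, is genuinely different. The paper works on the \emph{row} side: it selects a $c\times c$ invertible submatrix $F_\tau$ of $F$, deletes the $cd$ rows of ${\cal G}$ that touch the columns indexed by $\tau$, and then shows by a block-by-block left-kernel argument that the remaining $(c+d)s-cd$ rows of ${\cal H}$ are linearly independent. You instead work on the \emph{column} side: via the vec--Kronecker identity you identify $\ker({\cal H})$ with $\{E:FE=0,\ EG^T=0\}$ and parameterize that space as $E=PMQ^{T}$ to read off its dimension $(s-c)(s-d)$ directly. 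Your approach is more conceptual and yields equality with no case analysis; the paper's approach is more constructive in that it names an explicit maximal independent set of rows, which can be useful if one later wants to manipulate ${\cal H}$ concretely, but it requires more bookkeeping. Either proof suffices for the application in Lemma~\ref{lem:conditional_matrix_entropy}.
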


\begin{proof}
Since $\mathbfsl F$ is full-rank and $c\le s$, $\mathbfsl F$ must have at least one $c\times c$ full-rank submatrix. Let $\tau \subseteq [0:s-1]$ represent the indices of the columns of one such sub-matrix and let $\mathbfsl F_{\tau}$ be the corresponding submatrix of $\mathbfsl F$. We also define  
\begin{align}
    \rho = \{cs + i + sj|j\in [0:d-1], i \in \tau\}.
\end{align}
Intuitively, $\rho$ corresponds to the rows of ${\mathbfcal G}$ which may have a non-zero element in any column from ${\tau}$. Note that $|\rho| = cd$. We will argue that by eliminating the rows indexed in $\rho$ from ${\mathbfcal H}$, we will find a full-rank matrix. Since this resulting matrix has ${(c+d)s-cd}$ rows, the claim will follow. In Equation \eqref{eqn:bigmatrix} below, we have illustrated an example with $s = 4, c = 3$ and $d=2$. We have assumed that ${\tau} = \{1,2,3\}$ and have marked the full-rank submatrix of $\mathbfsl F$ in blue. The rows indexed in $\rho$ are shown in red.

Let $\pmb \lambda$ be a vector of length $(c+d)s$ such that $\lambda_i = 0, \forall i\in\rho$ and $\pmb \lambda {\mathbfcal H} = {\bf 0}$. We will show that ${\pmb \lambda}$ must be the all-zero vector. For $j\in[0:s-1]$, let ${\mathbfcal H}_j$ be the submatrix of ${\mathbfcal H}$ obtained from columns $\{i +sj|i+cs\in\rho\}$. Since $\pmb \lambda {\mathbfcal H} = {\bf 0}$, we must have $\pmb \lambda {\mathbfcal H}_j = {\bf 0}$. But for any $i\in [0:(c+d)s-1]\backslash [jc: (j+1)c-1]$, either $\lambda_i = 0$ or the $i$'th row of ${\mathbfcal H}_j$ is an all-zero vector. It follows that $\pmb \lambda_{[jc:(j+1)c-1]} \mathbfsl F_\tau = {\bf 0}$. But $\mathbfsl F_\tau$ is full-rank, so $\pmb \lambda_{[jc:(j+1)c-1]} = {\bf 0}$. Since this holds for every $j\in [0:s-1]$, we conclude that $\pmb \lambda_{[0:cs-1]}={\bf 0}$.

It follows from $\pmb \lambda_{[0:cs-1]}={\bf 0}$ and $\pmb \lambda {\mathbfcal H} = {\bf 0}$ that $\pmb \lambda_{[cs:(c+d)s-1]}{\mathbfcal G} = {\bf 0}$. Define the $s\times d$ matrix $\pmb \theta$ such that $\theta_{i,j} = \lambda_{i + sj}$. Observe that $\pmb \lambda_{[cs:(c+d)s-1]}{\mathbfcal G}$ is a vector found by rearranging the elements of the matrix $\pmb \theta \mathbfsl G$. Since $\pmb \lambda_{[cs:(c+d)s-1]}{\mathbfcal G} = {\bf 0}$, we must have $\pmb \theta \mathbfsl G = {\bf 0}$. Since $\mathbfsl G$ is full-rank, it follows that $\pmb = {\bf 0}$ and as a result $\pmb \lambda_{[cs:(c+d)s-1]} = {\bf 0}$. We conclude that $\pmb \lambda = {\bf 0}$ which establishes the claim.
\end{proof}

\begin{strip}
{
\begin{align}
{\mathbfcal H} =
\begin{bmatrix}
    f_{0,0} & \textcolor{blue}{f_{0,1}} & \textcolor{blue}{f_{0,2}} & \textcolor{blue}{f_{0,3}} & 0 & 0 & 0& 0& 0&0&0& 0& 0&0&0& 0\\
    f_{1,0} & \textcolor{blue}{f_{1,1}} & \textcolor{blue}{f_{1,2}} & \textcolor{blue}{f_{1,3}}& 0 & 0 & 0& 0& 0&0&0& 0& 0&0&0& 0\\
    f_{2,0} & \textcolor{blue}{f_{2,1}} & \textcolor{blue}{f_{2,2}} & \textcolor{blue}{f_{2,3}}& 0 & 0 & 0& 0& 0&0&0& 0& 0&0&0& 0\\
    0&0&0& 0& f_{0,0} & \textcolor{blue}{f_{0,1}} & \textcolor{blue}{f_{0,2}} & \textcolor{blue}{f_{0,3}} & 0 & 0 & 0& 0& 0&0&0& 0\\
    0&0&0& 0 &f_{1,0} & \textcolor{blue}{f_{1,1}} & \textcolor{blue}{f_{1,2}} & \textcolor{blue}{f_{1,3}}& 0 & 0 & 0& 0& 0&0&0& 0\\
    0&0&0& 0 &f_{2,0} & \textcolor{blue}{f_{2,1}} & \textcolor{blue}{f_{2,2}} & \textcolor{blue}{f_{2,3}}& 0 & 0 & 0& 0& 0&0&0& 0\\
    0&0&0& 0 &0&0&0& 0& f_{0,0} & \textcolor{blue}{f_{0,1}} & \textcolor{blue}{f_{0,2}} & \textcolor{blue}{f_{0,3}} & 0 & 0 & 0& 0\\
    0&0&0& 0 &0&0&0& 0 &f_{1,0} &\textcolor{blue}{f_{1,1}} & \textcolor{blue}{f_{1,2}} & \textcolor{blue}{f_{1,3}}& 0 & 0 & 0& 0\\
    0&0&0& 0 &0&0&0& 0 &f_{2,0} & \textcolor{blue}{f_{2,1}} & \textcolor{blue}{f_{2,2}} & \textcolor{blue}{f_{2,3}}& 0 & 0 & 0& 0\\
     0 & 0 & 0& 0& 0&0&0& 0& 0&0&0& 0 & f_{0,0} & \textcolor{blue}{f_{0,1}} & \textcolor{blue}{f_{0,2}} & \textcolor{blue}{f_{0,3}}\\
     0 & 0 & 0& 0& 0&0&0& 0& 0&0&0& 0 & f_{1,0} & \textcolor{blue}{f_{1,1}} & \textcolor{blue}{f_{1,2}} & \textcolor{blue}{f_{1,3}}\\
     0 & 0 & 0& 0& 0&0&0& 0& 0&0&0& 0 & f_{2,0} & \textcolor{blue}{f_{2,1}} & \textcolor{blue}{f_{2,2}} & \textcolor{blue}{f_{2,3}}\\
     g_{0,0} & 0 & 0 &0 & g_{0,1} & 0 & 0 & 0& g_{0,2} & 0 & 0 & 0 & g_{0,3} & 0 & 0 & 0 \\
    \textcolor{red}{0} & \textcolor{red}{g_{0,0}} & \textcolor{red}{0} & \textcolor{red}{0} &\textcolor{red}{0} & \textcolor{red}{g_{0,1}} & \textcolor{red}{0} & \textcolor{red}{0} & \textcolor{red}{0}& \textcolor{red}{g_{0,2}} & \textcolor{red}{0} & \textcolor{red}{0} & \textcolor{red}{0} & \textcolor{red}{g_{0,3}} & \textcolor{red}{0}& \textcolor{red}{0}\\
     \textcolor{red}{0} & \textcolor{red}{0} & \textcolor{red}{g_{0,0}} & \textcolor{red}{0} & \textcolor{red}{0} &\textcolor{red}{0} & \textcolor{red}{g_{0,1}} & \textcolor{red}{0} & \textcolor{red}{0} & \textcolor{red}{0}& \textcolor{red}{g_{0,2}} & \textcolor{red}{0} & \textcolor{red}{0} & \textcolor{red}{0} & \textcolor{red}{g_{0,3}} & \textcolor{red}{0}\\
     \textcolor{red}{0} & \textcolor{red}{0} & \textcolor{red}{0} & \textcolor{red}{g_{0,0}} & \textcolor{red}{0} & \textcolor{red}{0} &\textcolor{red}{0} & \textcolor{red}{g_{0,1}} & \textcolor{red}{0} & \textcolor{red}{0} & \textcolor{red}{0}& \textcolor{red}{g_{0,2}} & \textcolor{red}{0} & \textcolor{red}{0} & \textcolor{red}{0} & \textcolor{red}{g_{0,3}}\\
     g_{1,0} & 0 & 0 &0 & g_{1,1} & 0 & 0 & 0& g_{1,2} & 0 & 0 & 0 & g_{1,3} & 0 & 0 & 0 \\
    \textcolor{red}{0} & \textcolor{red}{g_{1,0}} & \textcolor{red}{0} & \textcolor{red}{0} &\textcolor{red}{0} & \textcolor{red}{g_{1,1}} & \textcolor{red}{0} & \textcolor{red}{0} & \textcolor{red}{0}& \textcolor{red}{g_{1,2}} & \textcolor{red}{0} & \textcolor{red}{0} & \textcolor{red}{0} & \textcolor{red}{g_{1,3}} & \textcolor{red}{0}& \textcolor{red}{0}\\
     \textcolor{red}{0} & \textcolor{red}{0} & \textcolor{red}{g_{1,0}} & \textcolor{red}{0} & \textcolor{red}{0} &\textcolor{red}{0} & \textcolor{red}{g_{1,1}} & \textcolor{red}{0} & \textcolor{red}{0} & \textcolor{red}{0}& \textcolor{red}{g_{1,2}} & \textcolor{red}{0} & \textcolor{red}{0} & \textcolor{red}{0} & \textcolor{red}{g_{1,3}} & \textcolor{red}{0}\\
     \textcolor{red}{0} & \textcolor{red}{0} & \textcolor{red}{0} & \textcolor{red}{g_{1,0}} & \textcolor{red}{0} & \textcolor{red}{0} &\textcolor{red}{0} & \textcolor{red}{g_{1,1}} & \textcolor{red}{0} & \textcolor{red}{0} & \textcolor{red}{0}& \textcolor{red}{g_{1,2}} & \textcolor{red}{0} & \textcolor{red}{0} & \textcolor{red}{0} & \textcolor{red}{b_{1,3}}
\end{bmatrix}.
\label{eqn:bigmatrix}
\end{align}
}
\end{strip}

\section{Adaptivity of \name \\ (Proof of Theorem \ref{theorem:Adaptivity InfoCommit})}
\label{appendix: Proof of Adaptivity thm}

We now provide the proof of Theorem \ref{theorem:Adaptivity InfoCommit}. 

\begin{proof}

Suppose that an adversarial server responds with $\mathcal V_\ell=\mathbfsl v_\ell+\mathcal W_\ell$ and $\mathcal U_\ell=\mathbfsl u_\ell+\mathcal Z_\ell$, where $(\mathcal W_l, \mathcal Z_l) \neq (\mathbfsl 0, \mathbfsl 0)$. That is, the server returns at least one incorrect computation. Since the server is informed whether the computation outcome is accepted or not after each evaluation, then we have the following Markov property 
\begin{align}
\label{eqn:markov-property}
(\mathbfsl \Lambda, \mathbfsl \Theta) \leftrightarrow	 (V_1, V_2, \cdots, V_{\ell-1}) \leftrightarrow	(\mathcal W_\ell, \mathcal Z_\ell),
\end{align}
where $V_\ell=\mathbf 1(\mathbfsl \Lambda \mathcal W_\ell=\mathbfsl 0, \mathcal Z_\ell \mathbfsl \Theta^{\mathrm T}=\mathbfsl 0)$. The probability that the server fails all verifications in all $m$ rounds can be expressed as follows 
\begin{align}
P_m &=\Pr \left( V_m = 0, V_{m-1}=0 \cdots, V_1=0 \right) \notag \\
&=\Pr \left( V_m = 0| V_{m-1}=0 \cdots, V_1=0 \right)  P_{m-1}.
\end{align}
The first term can be lower-bounded as follows
\begin{align}
    &T_m \coloneqq \Pr \left( V_m = 0| V_{m-1}=0 \cdots, V_1=0 \right) \notag \\
        &=\sum_{ (\mathbfsl w_m, \mathbfsl z_m) \neq (\mathbfsl 0, \mathbfsl 0) } \Pr \left( (\mathbfsl \Lambda \mathcal W_m \neq \mathbfsl 0) \cup  (\mathcal Z_m \mathbfsl \Theta^{\mathrm T} \neq \mathbfsl 0)| \right. \notag \\ &~~~~~~~ \left. (\mathcal W_m, \mathcal Z_m)=(\mathbfsl w_m, \mathbfsl z_m), V_{m-1}=0, \cdots, V_{1}=0 \right)  \notag \\
        & ~~~~~~~~~~~~~~~~\Pr((\mathcal W_m, \mathcal Z_m)=(\mathbfsl w_m, \mathbfsl z_m)|V_{m-1}=0, \cdots, V_{1}=0) \notag \\
        &=\sum_{ (\mathbfsl w_m, \mathbfsl z_m) \neq (\mathbfsl 0, \mathbfsl 0) } \Pr \left( (\mathbfsl \Lambda \mathbfsl w_m \neq \mathbfsl 0) \cup  (\mathbfsl z_m \mathbfsl \Theta^{\mathrm T} \neq \mathbfsl 0)| \right. \notag \\ & \left. ~~~~~~~~~~~~~~~~~~~~~~~~~~ V_{m-1}=0, \cdots, V_{1}=0 \right)  \notag \\
        & ~~~~~~~~~~~~~~~~~~~~\Pr((\mathcal W_m, \mathcal Z_m)=(\mathbfsl w_m, \mathbfsl z_m)|V_{m-1}=0, \cdots, V_{1}=0),
\end{align}
where the last equality follows from the Markov property in (\ref{eqn:markov-property}). The first term inside this summation can be bounded as  
\begin{align}
R_m &\coloneqq \Pr \left( (\mathbfsl \Lambda \mathbfsl w_m \neq \mathbfsl 0) \cup  (\mathbfsl z_m \mathbfsl \Theta^{\mathrm T} \neq \mathbfsl 0)| V_{m-1}=0, \cdots, V_{1}=0 \right) \notag \\
&= 1-\Pr \left( \mathbfsl \Lambda \mathbfsl w_m = \mathbfsl 0, \mathbfsl z_m \mathbfsl \Theta^{\mathrm T} = \mathbfsl 0| V_{m-1}=0, \cdots, V_{1}=0 \right)\notag \\
&=1-\frac{\Pr \left( \mathbfsl \Lambda \mathbfsl w_m = \mathbfsl 0, \mathbfsl z_m \mathbfsl \Theta^{\mathrm T} = \mathbfsl 0, V_{m-1}=0, \cdots, V_{1}=0 \right)}{P_{m-1}} \notag \\
& \geq 1-\frac{\Pr \left( \mathbfsl \Lambda \mathbfsl w_m = \mathbfsl 0, \mathbfsl z_m \mathbfsl \Theta^{\mathrm T} = \mathbfsl 0 \right)}{P_{m-1}} \notag \\
&\geq 1-\frac{1}{P_{m-1}}\left(\frac{2}{r^{c}}+\frac{1}{r^{2c}} \right).
\end{align}
Hence, we have
\begin{align}
    T_m \geq  1-\frac{1}{P_{m-1}} \left( \frac{2}{r^{c}}+\frac{1}{r^{2c}} \right)
\end{align}
and 
\begin{align}
    P_{m} &= T_m P_{m-1}\notag \\ 
          &\geq P_{m-1}-\left( \frac{2}{r^{c}}+\frac{1}{r^{2c}} \right) \notag \\
          & \geq P_1 -(m-1)\left( \frac{2}{r^{c}}+\frac{1}{r^{2c}} \right) \notag \\
          & \geq 1-m\left( \frac{2}{r^{c}}+\frac{1}{r^{2c}} \right),
\end{align}
where the last inequality follows since $P_1 \geq 1-\left( \frac{2}{r^{c}}+\frac{1}{r^{2c}} \right)$. Hence, we have    
 \begin{align}
        \Pr(&\exists \ell \in[m] \ \text{s.t.} \ \text{\bf Verify}(x_\ell,\hat{val}_\ell,\text{VK}, K_v) = 1,\hat{val}_\ell \neq {val}_\ell) \notag \\ &\le m \left( \frac{2}{r^{c}}+\frac{1}{r^{2c}} \right).
\end{align}

\end{proof}

\end{document}